\newtheorem{theorem}{Theorem}
\theoremstyle{definition}
\renewcommand{\arraystretch}{1.5}
\def\thm@space@setup{\thm@preskip=1.2\parskip \thm@postskip=0pt}
\begin{document}
\pagenumbering{gobble} 

%

\section*{Highlights}

\vspace{1em}
\begin{itemize}
	\item An Ordinary differential equation model for a wound healing process is adapted and extended to include systemic factors and chronic wounds. 
	
	\item Effects of oxygen, pathogen activity and inflammatory activity on wound healing are investigated. 
	
	\item Positivity of solutions and exsistence of equilibria are proved. 
	
	\item A global sensitivity analysis is conducted to identify the parameters with the greatest influence on the wound healing process.

\end{itemize}

\newpage

\clearpage 
\pagenumbering{arabic} 
\setcounter{page}{1} 

\title{A Mathematical Model of Wound Healing: Effects of Local and Systemic Factors}


\author{
	Alinafe Maenje$^{1}$\thanks{Corresponding author: \texttt{alinafe@aims.ac.za}}, and Joseph Malinzi$^{2,3}$ \\
	\small
	$^{1}$Department of Mathematics and Statistics, Malawi University of Business and Applied Sciences, Malawi \\
	\small
	$^{2}$ Department of Mathematics, University of Eswatini, Private Bag 4, Kwaluseni, Eswatini\\
	$^{3}$ Institute of Systems Science, Durban University of Technology, Durban 4000, South Africa
}

\date{}
\maketitle

\begin{abstract}

 This study presents a mathematical model formulated as a system of first-order non-linear ordinary differential equations, aimed at examining the effects of different factors, classified as local and systemic factors on a wound healing process. Specifically, the model incorporates pathogens, inflammatory cells, fibroblast cells, collagen, and the initial wound size. The variables utilized in the model lack specific units since they represent combined responses of different cell types. As such, these variables serve as indicators of the relative behaviors of these cellular populations. The feasibility of the model solution is examined through the demonstration of its positivity. A threshold, denoted $R_w$, is established to determine the conditions necessary for the existence of a wound-free equilibrium. Sensitivity analysis is carried out to determine the contribution of parameters to the behavior of fibroblast cells and $R_w$, which are crucial to the wound healing process. Through numerical simulations, it is demonstrated that factors such as oxygen levels, pathogen activity, inflammatory cell activity, age, smoking, alcoholism, and stress exert considerable influence on the behaviors of these cells, thereby contributing to delayed wound healing and the occurrence of chronic wounds.
\vspace{2em}

\noindent \textbf{Keywords:} Mathematical modelling, Ordinary differential equations, wound healing, acute wounds, chronic wounds.
\end{abstract}


\section{Introduction}

It is inevitable that human beings suffer from wounds that can be acute or chronic depending on the time they take to heal. Acute wounds are those that heal within a normal wound healing time frame, around 3 weeks, whereas chronic wounds are those that fail to heal in a normal time frame \citep{frykbergrobert2015challenges}. While some wounds such as pressure ulcers, diabetic foot ulcers, and venous leg ulcers, are classified as chronic, failure of an acute wound to heal in the normal time frame also leads to chronic wounds. These acute wounds may get stuck at one stage or fail to respond to treatment. This may be due to different diseases such as diabetes and infection. Individual chronic wound care costs approximately 9060 Euros a year on average in Germany, the total annual costs of chronic wound care is around 4.5 to 5.3 billion Euros in The United Kingdom \citep{velivckovic2022cost} and the United States spends USD28.1 to USD96.8 billion yearly on average \citep{sen2021human}. In Africa, little is known about care costs for both acute and chronic wounds. In Kenya, the cost for wound care is approximately USD2.25 per week according to a study by \cite{odhiambo2019wound} which concluded that the costs are affordable but the care provided is not up to standard. However, the study was only conducted in one hospital in Kenya. In Nigeria, a study by  \cite{ogundeji2022modelling} approximates wound care cost to be around 36,922 Naira per week, yet, the study is inconclusive on the question of affordability for the household with average monthly income. According to \cite{varela2017untreated}, people in African countries like Malawi, Uganda, and Sierra Leone tend to fail to get timely or proper medical care due to poverty. Failure of a wound to heal normally can lead to excessive scarring, paralysis, and death through septic shock \citep{martin2005inflammatory}.

The wound healing process is affected by a number of factors. \cite{guo2010factors} classify these factors as local and systemic factors. Local factors are defined as those that have a direct impact on the features of the wound \citep{guo2010factors}. According to \cite{rodriguez2008role}, oxygen is one of the local factors and important element in the wound healing process. It helps in inhibiting wound infection, promotes angiogenesis, increases fibroblast proliferation and collagen production, and promotes wound contraction. Wound infection is another local factor affecting the healing process. \cite{edwards2004bacteria} classify wounds according to the three states of infections. The contamination state occurs when a wound contains non-replicating organisms. A wound with reproducing organisms is classified as being in a colonisation state, while a wound in an intermediate condition is referred to as being in a local infection or critical colonisation state. Systemic factors are those that put an individual in a certain health or disease state that affect wound healing \citep{guo2010factors}. Systemic factors include age, stress, diabetes, obesity, medication, sex hormones in aged individuals, alcohol consumption, smoking and nutrition.  \cite{guo2010factors} conclude that individuals above the age of 60 experience a prolonged wound healing process. Among these aged people, the wound healing period also differs depending on gender with aged males having a longer wound healing period than aged females simply due to sex hormones. A study by \cite{vileikyte2007stress} on stress and wound healing finds that there is an association between stress and abnormal wound healing process. Diseases such as diabetes and obesity are also found to be associated with prolonged wound healing.  \cite{guo2010factors} also find a prolonged wound healing process in smoking and alcoholic individuals.  \cite{arnold2006nutrition} recommend foods rich in carbohydrates, protein, and amino acids to help in the process of wound healing citing the lack of certain food as a systemic factor in the wound healing process. 

To date, mathematical modelling has facilitated the study of various phenomena in the wound healing process by providing a robust framework to describe these processes. \cite{Hay2024} recently developed a model which describes the wound healing process of corals of the Pocillopora damicornis family, which is defined by a system of ordinary differential equations with hill-type functions and constants. The model was further validated using experimental data for coral reefs and the results showed a positive correlation between experimental data and the model prediction. In humans, quite a number of models have been developed and studied, for example \cite{reynolds2006reduced,menke2010silico, almeida2011mathematical, segal2012differential, cooper2015modeling, torres2019identifying} and some references therein. Some of these models, \cite{almeida2011mathematical,cooper2015modeling, torres2019identifying} have neglected to account for the crucial role of tissue oxygenation in wound healing, while others have overlooked the inflammatory response, which is known to be vital in this context. Although models proposed in \cite{reynolds2006reduced, menke2010silico,  segal2012differential} offer valuable insights into the wound healing process, they have overlooked the analysis and effect of other local factors and systemic factors. Consequently, this study aims to build upon previous studies, especially that of \cite{segal2012differential} to provide updated insights, mathematical analysis and sensitivity analysis to show important parameters in the wound healing process and provide the impact of local and systemic factors on cell activities and to the wound healing process.

\section{The Model}

We adapt a model by \cite{segal2012differential} and we extend it to include systemic factors and chronic wounds. A normal wound healing process has four main stages namely: hemostasis, inflammation, proliferation, and remodelling (maturation) stages \citep{guo2010factors}. The hemostasis stage involves blood clotting, constriction of blood vessels, blood changes to a semi-fluid state, and trapping platelets with threads of fibrin, a process called coagulation. This process is a result of collagen coming together with platelets in the presence of an enzyme called thrombin and it lasts 0 to 24 hours \citep{guo2010factors}. The second stage is the inflammation stage which involves localized swelling caused by leaking of discharges, helping to remove damaged cells, tissues, and bacteria. Normally, this stage takes about 1 to 7 days from injury although it can be delayed if there is further infection \citep{bertone1989principles}. The wound then goes into the proliferation stage, a stage which involves the building up of new tissue and blood vessels (angiogenesis) \citep{velnar2009wound}. Fibroblasts cause the wound to contract, pulling wound edges together, starting around day 4 to around day 24. Finally, the maturation stage (also known as the remodeling stage) commences approximately on day 14 and extends up to about 2 years. This stage involves the remodelling of collagen and full wound closure,  encompassing collagen remodeling and complete wound closure, along with the elimination of surplus cells involved in the closure process \citep{flegg2012wound}.

We consider the presence of the wound to be the driver of the wound healing process with collagen being the main protein component in the process. The model tracks the relative activity of collagen in the healing wound. The presence of the initial wound ($WS_0$) initiates the activation of inflammatory cells due to the presence of pathogens on the wound. This inflammation will activate more inflammation. The activated inflammatory cells, which include several cell types and their mediators, destroy fibroblast cells which produce collagen for tissue remodelling \citep{segal2012differential}.

\begin{figure}[]
	\centering
	\includegraphics[width=0.9\linewidth]{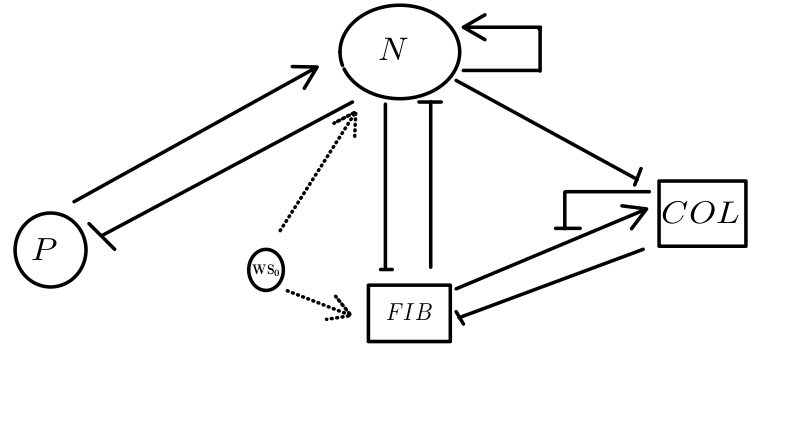}
	\caption{\begin{small}
			Diagrammatic representation of the model. The initial wound size, $WS_0$, is used in the collagen accumulation model. This leads to the recruitment of pathogens (P). Inflammatory cells (N) and fibroblast cells (which produce collagen) are produced by the surrounding healthy tissues. Up regulation is shown by arrows while down regulation (inhibition) is represented by bars.\end{small}}
	\label{fig:model}
\end{figure}

Activated inflammatory cells ($N$) destroy fibroblast cells ($FIB$) in a non-linear process. Activated inflammatory cells also destroy collagen and pathogens. However, the destruction of pathogens by inflammatory cells is dependent on oxygen levels and the inhibition of inflammatory cells by fibroblast cells. Inflammatory cells are destroyed by active fibroblast cells. Fibroblast cells are responsible for the production of collagen in a non-linear process subject to contact inhibition. However, less collagen is needed when the wound is going into the remodelling stage so that active fibroblast cells will start to degrade the available collagen.

Model parameters are provided in Table \ref{long1} and variables in Table \ref{long2}. The relative activity of each variable is considered since most variables are a combination of different cells. 

\begin{center}
	\begin{longtable}{l l }
		\caption[Short caption]{Model variables.} \label{long2} \\
		
		\hline
		\multicolumn{1}{c}{\textbf{Variable}} & 
		\multicolumn{1}{c}{\textbf{Description}} \\ \hline
		\endfirsthead	
		\endfoot	
		\hline
		\endlastfoot			
		$FIB_{\text{p}}$ & Proliferating fibroblast cells\\
		$FIB_{\text{m}}$ & Migrating fibroblast cells\\
		$FIB_{\text{a}}$ & Activate fibroblast cells\\
		$N$ & Relative activity of Inflammatory cells and other mediators\\
		$P$ & Pathogen concentration on the wound\\
		$COL$ & Percentage space filled with collagen\\
		$WS$ & Wound size\\	 \hline		
	\end{longtable}
\end{center}

\begin{center}
	\begin{longtable}{l l l}
		\caption[Short caption]{Model Parameters \citep{reynolds2006reduced, segal2012differential}} \label{long1} \\
		
		\hline
		\multicolumn{1}{c}{\textbf{Parameter}} & 
		\multicolumn{1}{c}{\textbf{Description}} & 
		\multicolumn{1}{c}{\textbf{Baseline}} \\ \hline
		\endfirsthead	
		\endfoot	
		\hline
		\endlastfoot		
		$\mu_{\text{fib}}$ & Fibroblasts decay rate & $0.1/ \text{day}$ \\
		$p_{\text{fib}x}$  & Reproduction rate of fibroblasts & $\text{m}:0.6, \text{p}:0.4, \text{a}:0.3/ \text{day}$ \\
		$d_\text{f}$ & Transition rate of fibroblasts & $0.3/ \text{day}$ \\
		$k_{\text{fn}x}$ &Rate of destruction of fibroblast $x~$ by $N$ & $\text{m}:0.5, \text{p}:0.4, \text{a}:0.3/ \text{day}$ \\
		$x_{\text{fn}}$ & Hill constant for destruction of fibroblasts by $N$ & $0.6~$ N-units \\
		$p_{\text{bl}}$ & Baseline reproduction rate being non-zero when \\
		\nonumber& $COL \cdot FIB > 0.01$ & $0.08~\text{F-units} / \text{day}$\\
		$k_{\text{cf}}$ & Rate of production of collagen by $FIB_a$ & $1/\text{day}$\\	$x_{\text{cf}}$ & Hill constant for  production of collagen by $FIB_a$ & $4.2~$ F-units\\
		$h_{\text{fc}}$ & Inhibition exponent for contact inhibition in \\
		\nonumber & $COL~$ production & $1$ \\
		$k_{\text{cn}}$ & Rate of destruction of $COL~$ by $N$  & $2/N-\text{units}/\text{day}$ \\
		$k_{\text{cfr}}$ & Rate of degradation of collagen by $FIB_a$ & $500/\text{day}$\\	$x_{\text{cfr}}$ &Hill constant for degradation of collagen by $FIB_a$ & $5~\text{F-units}$\\
		$N_{\text{crit}}$ & Maximum of $N~$ before the wound starts remodelling &$0.01~N\text{-units}$ \\
		$k_{\text{np}}$ & Rate of activation of $N~$ by $P$ & $0.5/\text{P-units}/ \text{day}$ \\
		$k_{\text{nn}}$ & Rate of activation of $N~$ by immune mediators & $3/ N^3\text{-units}/ \text{day}$\\
		$k_{\text{nw}}$ & Rate of activation of $N~$ by wound size & $2/ \text{day}$\\
		$s_{\text{nr}}$ & Source rate of resting inflammatory cells & $2N~\text{-units} / \text{day}$\\
		$\mu_{\text{nr}}$ & Rate of decay of inflammatory cells & $2.88/ \text{day}$\\
		$k_{\text{nf}}$ & Rate of at which $FIB_a$ destroys $N$ & $0.1/  \text{F-units}/ \text{day}$\\
		$\mu_n$ &Natural death rate of $N$  & $1.2/ \text{day}$\\
		$n_{\infty}$ &Inhibition constant of $FIB_x~$ reproduction/diffusion \\
		\nonumber & by  N & $0.6~\text{N-units}$\\
		$c_{\infty}$ & Contact inhibition constant of $FIB_x~$ reproduction \\
		\nonumber & by collagen  & $0.5$\\
		$h_c$ & Inhibition exponent for contact inhibition of $FIB_x~$\\	\nonumber &reproduction by collagen  & $4$\\
		$c_{ f\infty}$ & Inhibition constant for contact inhibition in \\
		\nonumber &collagen production   & $0.8~\text{COL-units}$\\
		$c_{ fr\infty}$ & Inhibition constant for contact inhibition in \\
		\nonumber &collagen degradation   &  $1.8~\text{COL-units}$\\
		$F_{\infty}$ & Inhibition constant for inflammation inhibition $N~$ by \\
		\nonumber &active fibroblast   & $6~\text{F-units}$\\
		$P_{\infty}$ & Maximum population of pathogens  & $20  \times 10^6/ \text{cc}$\\
		$k_{pgO}$ & Pathogen growth under normal oxygen & $0.55/ \text{day}$\\
		$ \beta_{\text{p}}$ & Maximum growth rate increase of pathogens due to\\
		\nonumber &oxygen reduction & $0.3/ \text{day}$\\
		$O_{\text{crit}}$ & Normal oxygen levels & $25$\\
		$k_{\text{pm}}$ & Rate of destruction of pathogens by immune \\
		\nonumber &mediators M & $0.6/ \text{M-units}/ \text{day}$\\
		$s_{\text{m}}$ & Source of background immune mediators & $0.12/ \text{M-units}/ \text{day}$\\
		$\mu_{\text{m}}$ & death rate of immune mediators & $0.048/ \text{day}$\\
		$k_{\text{mp}}$ & Rate of activation of $M~$ by $P$ & $0.108 \text{P-units}/ \text{day}$\\
		$k_{\text{pn}}$ & Rate of destruction of $P~$ by $N$ & $0.2/\text{N-units}/ \text{day}$\\
		$w_{\text{sg}}$ & Effect of less oxygen on the wound size & $0.6$\\
		$T_0$ & Initial temperature on the wound site & $\SI{37}{\degreeCelsius}$\\
		$T$ & Final temperature on the wound site & $\SI{40}{\degreeCelsius}$\\ \hline
	\end{longtable}
\end{center} 

Fibroblasts are a common cell type that produces collagen protein used to maintain a structural framework for many tissues and in a healing wound. They are raised as a response to inflammation. These fibroblast cells divide (proliferate), move to the wound (migrating), and eventually become activated and create collagen. These fibroblast cells have a multiplication rate, $\text{p}_{\text{fibx}}$ and death rate $-\mu_{\text{fibx}}$. Fibroblasts are available in a normal skin for immediate response to a damaged tissue cell, hence there is an initial value of the proliferating fibroblast cell, $FIB_p=10.~$ Upon the occurrence of the wound, these fibroblasts diffuse to the wound at the rate $\text{d}_f,~$ reducing the initial population of proliferating fibroblasts. Fibroblasts remain non-zero, even if the wound has healed. Hence, another term $\text{p}_{\text{bl}}\;$ is introduced in the equation for $FIB_a$ \citep{menke2010virginia}. Inflammatory cells destroy fibroblast cells. The ability of inflammatory cells to destroy these fibroblast cells has a maximum rate, $\text{k}_\text{fnx},\;$ for each type of fibroblast cells ($\text{x}$). This dynamic is modelled using a Hill function, $f_H(x,V) = (x / (V+x)) \label{8},$ where $x\;$ represents inflammatory cells and $V\;$ is the Hill constant for the destruction of specific type of fibroblast cells destruction by inflammatory cells. Apart from inflammatory cells destroying fibroblast cells, there are two types of inhibition. Cell-mediated inhibition, where one cell suppress the activity of the other cell and contact inhibition. Contact inhibition occurs when the activity of a cell is suppressed due to collagen accumulation on the wound. All the two types of inhibition are non-linear processes and are modelled using different Hill-type functions. Cell-mediated inhibition depends on the cell being inhibited (x), inhibitor (V), and the inhibition constant for inhibition, which is dependent on the cell causing inhibition ($\text{V}_\infty$). The Hill function modelling this process has the form: \begin{equation}
	\label{cont}  f_i(x,V,V_\infty)= \frac{x}{1+(V/V_\infty)^2} 
\end{equation}
The Hill function, which models cell-mediated inhibition, multiplies the proliferation and migration terms. Contact inhibition of fibroblast cells has the form: \begin{align*}f_c(COL,c_\infty,h_\infty) = \frac{1}{1+(COL/c_\infty)^{h_\infty}},  \end{align*} \label{10} where $c_\infty~$ is the contact inhibition constant of proliferating fibroblast cells by collagen and $h_\infty~$ is the inhibition exponent for contact inhibition \citep{segal2012differential}. The availability of more collagen matrix means there are more active fibroblasts, the body then regulates the production of more fibroblasts. This means the contact inhibition term will affect the proliferation of fibroblast cells, hence the proliferation terms will be multiplied by the contact inhibition term. When fibroblast cells are exposed to inflammation, their ability to multiply and migrate is also impaired, a process that follows a Hill function: 
\begin{align*}
	f_H(x,V) &=\frac{x}{V+x}, ~\text{where} ~~ 
	f_H(N,x_{\text{fn}}) =\frac{N}{x_{\text{fn}}+N}.
\end{align*}

These dynamics give the equations (\ref{1})-(\ref{3}): 

\begin{align}	
	\frac{dFIB_\text{p}}{dt} &= -\mu_{\text{fib}} FIB_\text{p}+ p_{\text{fibp}} f_i(FIB_\text{p}, N, n_\infty)f_c(COL, c_\infty, 4)- \nonumber 
	d_f f_i(FIB_\text{p}, N, n_\infty)-\\
	& ~~~~~k_{\text{fnp}} f_H(N,x_{\text{fn}})FIB_\text{p}, \label{1} \\
	\frac{dFIB_\text{m}}{dt} &= -\mu_{\text{fib}} FIB_\text{m}+ p_{\text{fibm}} f_i(FIB_\text{m}, N, n_\infty)f_c(COL, c_\infty, 4)-  d_f f_i(FIB_\text{m}, N, n_\infty) \nonumber \\ & 
	~~~~+ d_f f_i(FIB_\text{p}, N, n_\infty)-k_{\text{fnm}} f_H(N,x_{\text{fn}})FIB_\text{m}, \label{2}\\
	\frac{dFIB_\text{a}}{dt} &= -\mu_{\text{fib}} FIB_\text{a}+ p_{\text{fiba}} f_i(FIB_\text{a}, N, n_\infty)f_c(COL, c_\infty, 4)+d_f f_i(FIB_\text{m}, N, n_\infty)- \nonumber \\ &~~~~~k_{\text{fna}}f_H(N,x_{\text{fn}})FIB_\text{a} + p_{\text{bl}} \label{3}.
\end{align}

Collagen is produced by active fibroblast cells at the rate $k_{\text{cf}}$. This process is non-linear and is modelled by a Hill-type term $ k_{\text{cf}} f_H(FIB_a,x_\text{cf}),~$ where
\begin{align*}
	f_H(FIB_a,x_\text{cf})=\frac{FIB_a}{ FIB_a+x_\text{cf}}.  \end{align*} \label{11} As the wound heals, less collagen is needed, so the production term will also be multiplied by the term for contact inhibition,\begin{align*} 
	f_c (COL, c_{f\infty}, h_{\text{fc}})=\frac{1}{1+(COL/c_{f\infty})^{h_{\text{fc}}}}. \end{align*} \label{12} Collagen is also destroyed by inflammatory cells at a rate $-k_{\text{cn}}$ and this gives $-k_{\text{cn}}NCOL. $ Under the effect of body temperature, this term becomes $-k_{\text{cn}}NCOL e^{-(T-T_0)^2}.\label{temp1}$ According to \cite{segal2012differential}, fibroblast cells degrade collagen after the wound has healed. This procedure is crucial because it prevents immoderate scarring. This is a non-linear and non-smooth process, so a Hill-type function and a Heaviside function are employed for non-linearity and to provide a smooth approximation of the process respectively. The degradation is at a rate $k_{\text{cfr}},~$ with a Hill-function: \begin{align*}
	f_H(FIB_a,x_\text{cf})=\frac{FIB_a}{ FIB_a+x_\text{cf}}.\end{align*} \label{13} A Heaviside function is applied to make the approximation of this degradation process smooth. For active fibroblast cells to start degrading collagen, there has to be little or no inflammation. Hence, the Heaviside function is a function of inflammatory cells because the absence of inflammatory cells is a clear indication of a healing wound and that means collagen degradation can start. This gives the Heaviside function, \begin{align*}
	s_h(N_{\text{crit}-N}) = \frac{1}{1+e^{-50({N_\text{crit}}-N)}}, 
\end{align*} \label{14} which multiplies the deterioration term. This degradation term is also subject to contact inhibition. However, in this case, the contact inhibition is being applied on the maximum levels of collagen, and hence we will have: \begin{align*}
	(1-f_c(COL, c_{fr\infty}, 12))= 1- \frac{1}{1+(COL/c_{f\infty})^{h_{\text{fc}}}}.
\end{align*} \label{15} 
This gives the collagen equation as:    
\begin{align}
	\frac{dCOL}{dt} &= k_{\text{cf}} f_H(FIB_a,x_\text{cf})f_c (COL, c_{f\infty}, h_{\text{fc}})- k_{\text{cn}}NCOL e^{-(T-T_0)^2}- \nonumber \\ &
	~~~k_{\text{cfr}}f_H(FIB_a,x_{\text{cfr}})s_h(N_{\text{crit}}-N)(1-f_c(COL, c_{fr\infty}, 12)). \label{4}
\end{align}

There are different types of cells produced by the body that helps fight pathogens on the wound, neutrophils and eosinophils being the most active ones. These cells are released by the body to the wound, in response to pathogens and other inflammatory cells, in a resting state, and once activated, they produce mediators, cytokines and chemokines being the common ones, which helps clearing the wound and prevents further tissue damage \citep{davis_smart_2016}. The activated inflammatory cells and their mediators are modelled by $N$. To model this, a constant source rate of resting inflammatory cells $s_{\text{nr}}~$ is considered. These resting cells are activated at a rate $R,\;$ which is dependent on the pathogen population, amount of pro-inflammatory mediators produced in response to the pathogens and the wound size. Hence $R=k_{np}P+k_{nn} N^3+k_{nw}WS \label{16} $. According to  \cite{segal2012differential}, the pro-inflammatory mediators are exponentiated to describe their accumulation to produce a positive effect.  \cite{reynolds2006reduced} establishes that this activation process occurs so fast that it becomes difficult to track resting inflammatory cells, and the activation process simplifies to $(s_{\text{nr}} R/\mu_{\text{nr}}+R)$ where $\mu_{\text{nr}}\;$ is the decay rate of resting inflammatory cells. However, this activation process is inhibited by active fibroblast cells \citep{buckley2001fibroblasts}. Using the function that models cell-mediated inhibition (\ref{cont}), the activation process with inhibition is defined as:\begin{align*}
	R_i= f_i(R,FIB_a,F_\infty) = \frac{R}{1+(FIB_a/F_\infty)^2} = \frac{k_{np}P+k_{nn} N^3+k_{nw}WS}{1+(FIB_a/F_\infty)^2},
\end{align*} \label{17} 
where $F_\infty,$ is the inhibition constant for inhibition of inflammation by active fibroblasts. Fibroblast cells also destroy inflammatory cells, at the rate $-k_{\text{nf}},$ raising normal temperature levels at the wound site, giving $-k_{\text{nf}}FIB_aNe^{-(T-T_0)^2},\label{temp2}$ and there is a natural death rate $-\mu_n\;$ of inflammatory cells \citep{menke2007impaired}, and these dynamics yield the equation:         
\begin{align}
	\frac{dN}{dt}= \frac{s_{\text{nr}}R_i}{\mu_{\text{nr}}+R_i}-k_{\text{nf}}FIB_aNe^{-(T-T_0)^2}-\mu_n N \label{5}.
\end{align}\\

Pathogen growth is modelled logistically. This growth rate is heavily dependent on the oxygenation of the wound. Considering the critical oxygen level to be 25, oxygen levels above critical levels will fix the rate of growth of pathogens, while the pathogen growth rate will increase in less oxygenated wounds \citep{menke2010silico, segal2012differential}. Hence, we have the pathogen growth rate as a function of oxygen defined as:
\[ k_\text{pg} (O_2)= \begin{cases} 
	k_\text{pg0}, & \text{if}~ O_2 \geq O_\text{crit} \\
	
	k_\text{pg0} + \beta_{\text{p}}(1-\frac{O_2}{O_\text{crit}}) & \text{if}~ O_2 < O_\text{crit}.
\end{cases} \label{o2andpath} 
\]
According to  \cite{menke2010virginia}, the pathogen growth term is also dependent on the population of pathogens $P\;$ and the maximum number of the pathogens that can be sustained. These dynamics yield the pathogen growth term as: \begin{align*}
	k_{\text{pg}}(O_2)P \bigg( 1-\frac{P}{P_\infty}\bigg).
\end{align*} \label{19} Apart from immune mediators helping in cleaning the wound, they also help in directly fighting the pathogens. According to \cite{reynolds2006reduced}, this dynamic is modelled as:\begin{align*}
	- \frac{k_\text{pm} s_\text{m}P}{\mu_{\text{m}}+k_{\text{mp}}P},	
\end{align*} \label{20} where $-k_{\text{pm}}\;$ is the rate of destruction of pathogens by these mediators, $s_{\text{m}}\;$ is the constant source of the mediators and $\mu_{\text{m}}\;$ is the natural death rate of the mediators. Activated inflammatory cells destroy pathogens at the rate $k_\text{pn}$. However, the activity of these inflammatory cells is inhibited by active fibroblasts. This dynamic gives the third term $-k_{\text{pn}}P f_i(N, FIB_a, F_\infty)\;$. This activity, however is dependent on oxygen levels. Low oxygen levels will increase the number of pathogens and reduce the activity of inflammatory cells and sufficient oxygen levels will increase the activity of inflammatory cells and reduce the increase of pathogen population. This leads to an addition of a function of oxygen in the term and gives: $-k_{\text{pn}}P f_i(N, FIB_a, F_\infty) (1-g(O_2),\;$  where 
\begin{equation*} \label{oxy}	 
	g(O_2)= 1-\frac{1.1}{1+ 0.1e^{-0.3(O_2-O_{\text{crit}})}}.
\end{equation*}
This gives the pathogen activity equation as:
\begin{align}
	\label{6}
	\frac{dP}{dt}= k_{\text{pg}}(O_2)P \bigg( 1-\frac{P}{P_\infty}\bigg) -\frac{k_\text{pm} s_\text{m}P}{\mu_{\text{m}}+k_{\text{mp}} P}-k_{\text{pn}}P f_i(N, FIB_a, F_\infty) (1-g(O_2)).
\end{align}

According to \cite{segal2012differential}, collagen deposition and oxygenation have a direct effect on wound size. The non-negative variable $WS$ in the model gives wound size. It is also important to note that $COL$ can take values greater than one assuming excessive collagen production and abnormal remodelling We use $w_{\text{sg}}\;$ to account for the effect of altered oxygen levels on the wound size and these dynamics give the wound size equation as:
\begin{align}
	WS=\text{max} \bigg( (1-COL) WS_0 \frac{1}{1-w_{\text{sg}} g(O_2)}, 0\bigg). \label{7}
\end{align}

Combining the equations discussed above, we have the system of equations (\ref{2.10.1})- (\ref{2.10.6}):
\begin{align}	
	\frac{dFIB_\text{p}}{dt} &= -\mu_{\text{fib}} FIB_\text{p}+ p_{\text{fibp}} f_i(FIB_\text{p}, N, n_\infty)f_c(COL, c_\infty, 4)- \nonumber 
	d_f f_i(FIB_\text{p}, N, n_\infty)-\\
	& ~~~~~k_{\text{fnp}} f_H(N,x_{\text{fn}})FIB_\text{p}\label{2.10.1}, \\ 	
	\frac{dFIB_\text{m}}{dt} &= -\mu_{\text{fib}} FIB_\text{m}+ p_{\text{fibm}} f_i(FIB_\text{m}, N, n_\infty)f_c(COL, c_\infty, 4)-  d_f f_i(FIB_\text{m}, N, n_\infty) \nonumber \\ & 
	~~~~+ d_f f_i(FIB_\text{p}, N, n_\infty)-k_{\text{fnm}} f_H(N,x_{\text{fn}})FIB_\text{m} \label{2.10.2},\\ 	 
	\frac{dFIB_\text{a}}{dt} &= -\mu_{\text{fib}} FIB_\text{a}+ p_{\text{fiba}} f_i(FIB_\text{a}, N, n_\infty)f_c(COL, c_\infty, 4)+d_f f_i(FIB_\text{m}, N, n_\infty)- \nonumber \\ &~~~~~k_{\text{fna}}f_H(N,x_{\text{fn}})FIB_\text{a} + p_{\text{bl}}, \label{2.10.3}\\	 
	\frac{dCOL}{dt} &= k_{\text{cf}} f_H(FIB_a,x_\text{cf})f_c (COL, c_{f\infty}, h_{\text{fc}})- k_{\text{cn}}NCOLe^{-(T-T_0)^2}- \nonumber \\ &
	~~~k_{\text{cfr}}f_H(FIB_a,x_{\text{cfr}})s_h(N_{\text{crit}}-N)(1-f_c(COL, c_{fr\infty}, 12)), \label{2.10.4}\\	 
	\frac{dN}{dt}&= \frac{s_{\text{nr}}R_i}{\mu_{\text{nr}}+R_i}-k_{\text{nf}}FIB_aNe^{-(T-T_0)^2}-\mu_n N, \label{2.10.5}\\	 
	\frac{dP}{dt} &= k_{\text{pg}}(O_2)P \bigg( 1-\frac{P}{P_\infty}\bigg) -\frac{k_\text{pm} s_\text{m}P}{\mu_{\text{m}}+k_{\text{mp}} P}-k_{\text{pn}}P f_i(N, FIB_a, F_\infty) (1-g(O_2))\label{2.10.6}, 
\end{align} 
where
\begin{align*}
	f_H(x,V) &=\frac{x}{V+x},\\
	f_i(x,V,V_\infty) &= \frac{x}{1+(V/V_\infty)^2},\\
	f_c(COL,c_\infty,h_\infty) &= \frac{1}{1+(COL/c_\infty)^{h_\infty}},\\
	WS &=\text{max} \bigg( (1-COL) WS_0 \frac{1}{1-w_{\text{sg}} g(O_2)}, 0\bigg),\\
	s_h(x) &= \frac{1}{1+e^{-50x}},\\
	R_i &= f_i(k_{np}P+k_{nn} N^3+k_{nw}WS, FIB, F_\infty),\\
	g(V)&= 1-\frac{1.1}{1+ 0.1e^{-0.3(V-O_{\text{crit}})}},
\end{align*}
subject to  $FIB_p=10, P=0.5\;$ and $FIB_m=FIB_a = COL = N =0.$\\

\section{Positivity of solutions}

\begin{theorem}\label{se}
	Let $\displaystyle FIB_p(0)\geq 0, FIB_m(0)\geq 0, FIB_a(0)\geq 0, COL(0)\geq 0, N(0)\geq 0, P(0)\geq 0.$ The solutions $\displaystyle FIB_p(t), FIB_m(t), FIB_a(t), COL(t), N(t), P(t)$ are always positive for all $t\geq 0.$
\end{theorem}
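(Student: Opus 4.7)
The plan is to invoke the standard tangent (Nagumo-type) invariance criterion for the closed non-negative orthant. Each of the Hill-type functions $f_H$, $f_i$, $f_c$, the sigmoid $s_h$, the oxygen response $g$, and the $\max(\cdot,0)$ in the definition of $WS$ is globally Lipschitz in its arguments (the denominator $1-w_{\text{sg}}g(O_2)$ stays bounded away from zero because $g$ is bounded), so the right-hand side of (\ref{2.10.1})--(\ref{2.10.6}) is locally Lipschitz and Picard--Lindel\"of provides a unique local solution. It then suffices to show that whenever one coordinate $X_i$ of the state vector vanishes while every other coordinate is non-negative, the corresponding component of the vector field satisfies $\dot X_i\ge 0$; this precisely rules out the trajectory crossing any coordinate hyperplane outward.

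I would verify the tangent condition component by component. For (\ref{2.10.1}), setting $FIB_{\text{p}}=0$ makes $f_i(FIB_{\text{p}},N,n_\infty)=FIB_{\text{p}}/(1+(N/n_\infty)^2)$ vanish together with the loss term $k_{\text{fnp}}f_H(N,x_{\text{fn}})FIB_{\text{p}}$, yielding $\dot{FIB_{\text{p}}}|_{FIB_{\text{p}}=0}=0$. The same cancellation applied to (\ref{2.10.2}) leaves the manifestly non-negative cross-term $d_f f_i(FIB_{\text{p}},N,n_\infty)$, and applied to (\ref{2.10.3}) it leaves $d_f f_i(FIB_{\text{m}},N,n_\infty)+p_{\text{bl}}\ge 0$. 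For the pathogen equation (\ref{2.10.6}), every summand carries $P$ as an explicit factor, so $\dot P|_{P=0}=0$.

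The two mildly delicate cases are $COL$ and $N$. For (\ref{2.10.4}), the key observation is that $f_c(0,c_{f\infty},h_{\text{fc}})=1$, so the factor $1-f_c(0,c_{fr\infty},12)=0$ annihilates the fibroblast-driven degradation term precisely when $COL=0$; the inflammation-driven destruction $-k_{\text{cn}}NCOL\,e^{-(T-T_0)^2}$ also vanishes, leaving only the non-negative production $k_{\text{cf}}f_H(FIB_{\text{a}},x_{\text{cf}})\ge 0$. For (\ref{2.10.5}), setting $N=0$ kills the two loss terms and the $k_{\text{nn}}N^3$ contribution in $R$; the residual $R=k_{\text{np}}P+k_{\text{nw}}WS$ is non-negative since $P\ge 0$ and $WS\ge 0$ (the latter automatic from the max-with-zero definition), so $R_i\ge 0$ and $\dot N|_{N=0}\ge 0$.

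I do not anticipate a serious obstacle: the model has been written so that every negative contribution carries the corresponding state variable, or a Hill factor that vanishes with it, as a factor. The only point worth flagging is that the non-smoothness of $\max(\cdot,0)$ and of $s_h$ near $N=N_{\text{crit}}$ obliges one to use the Lipschitz (rather than $C^1$) version of Picard--Lindel\"of. Global existence on $[0,\infty)$ then follows because, once the non-negative orthant is known to be invariant, each right-hand side is bounded on bounded sets (the production terms are either constants or images of bounded Hill functions), ruling out finite-time blow-up and extending the local solution to all $t\ge 0$.
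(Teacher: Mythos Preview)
Your argument is correct and is essentially the same boundary-checking idea the paper uses: both verify that when a state variable reaches zero with the others non-negative, its time derivative is non-negative, so no trajectory can exit the non-negative orthant. Your version is cleaner and more complete---you invoke the Nagumo invariance criterion explicitly, treat all six components (the paper writes out only $N$ and $FIB_m$ and appeals to similarity), and you additionally address local Lipschitz existence and global continuation, points the paper leaves implicit.
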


\begin{proof}
	To prove Theorem \ref{se}, we show that the state variables/solutions of the system of equations (\ref{2.10.1})- (\ref{2.10.6}) are increasing functions in the interval $[t_0,t_f]$ and subject to the initial conditions $\displaystyle FIB_p(0)\geq 0, FIB_m(0)\geq 0, FIB_a(0)\geq 0, COL(0)\geq 0, N(0)\geq 0, P(0)\geq 0.$ In this case, we proceed by assuming that any of the variables will change its sign in the interval $t=\left(t_0,t_f\right),$ where $t_0$ is the initial time and $t_f$ is the final time, then we use proof by contradiction.\\
	
	Suppose any of the state variables are negative for all time $t\in (t_0,t_f),$ then there should exist a time point $\tau\in (t_0,t_f)$ where the particular variable changes its sign or hits the value zero for the first time, which we can write as $$\tau = \inf\{t: N(\tau)=0,\;\text{or}\; FIB_m(\tau)=0,\;\text{or}\; FIB_a(\tau)= 0,\;\text{or}\; COL(\tau)=0,\;\text{or}\; FIB_p(\tau)= 0,\;\text{or}\; P(t)= 0\}.$$
	
	Suppose $N$ is the first variable to hit zero, i.e. $N(\tau)=0,$ then for all $t\in (t_0,\tau),$ $N(t)>0,\; FIB_m(t)>0,\; FIB_a(t)>0,\; COL(t)>0,\; FIB_p(t)>0,\; P(t)>0,$ while $\displaystyle \frac{dN(t)}{dt}<0.$
	
	From Equation (\ref{2.10.5}), we observe that $$\frac{dN(t)}{dt}= \frac{k_{np}P(t)+k_{nn}N^3(\tau)+k_{nw}WS}{1+\left[\frac{FIB_a(t)}{F_{\infty}}\right]^2}-k_{nf}FIB_a(t)N(\tau)e^{-(T-T_0)^2}-\mu_nN(\tau)
	=\frac{k_{np}P(t)+k_{nw}WS}{1+\left[\frac{FIB_a(t)}{F_{\infty}}\right]^2}>0.$$
	
	This contradicts the assumption that $\frac{dN(\tau)}{dt}<0$ in the interval $t\in(t_0,\tau)$ and implying $N(t)$ cannot hit the value zero first. Thus $N(t)> 0$ for all $t\geq 0.$ Now suppose the variable $FIB_m$ is not positive for all time $t\geq 0,$ then there should exist a time point $\tau\in (t_0,t_f)$ where $FIB_m$ changes its sign or hits the value zero for the first time, which we can write as $$\tau = \inf\{t: FIB_m(\tau)=0,\;\text{or}\; FIB_p(\tau)=0,\;\text{or}\; FIB_a(\tau)= 0,\;\text{or}\; COL(\tau)=0,\;\text{or}\; P(\tau)= 0\}.$$
	
	For all $t\in (t_0,\tau),$ $FIB_m(t)>0, FIB_p(t)>0, FIB_a(t)>0, COL(t)>0, N(t)>0, P(t)>0,$ while $\displaystyle \frac{dFIB_m(t)}{dt}<0.$ From Equation (\ref{2.10.2}), we observe that\\
	
	$\displaystyle \frac{dFIB_m(t)}{dt}= -\mu_{fib}FIB_m(\tau)+\frac{p_{fibm}FIB_m(\tau)}{\left[1+(N(t)+N_{\infty})^2\right]\left[1+\left(\frac{COL(t)}{C_{\infty}}\right)^4\right]}-\frac{d_fFIB_m(\tau)}{1+(N+N_{\infty})^2}+ \frac{d_fFIB_p(t)}{1+(N(t)+N_{\infty})^2}  
	-\frac{k_{fnm}N(t)FIB_m(\tau)}{x_{fn}+N(t)}=\frac{d_fFIB_p(t)}{1+(N(t)+N_{\infty})^2}>0.$
	
	This contradicts the assumption that $\displaystyle\frac{dFIB_m(\tau)}{dt}<0,$ i.e. $\displaystyle FIB_m$ is decreasing in the interval $\displaystyle t\in(t_0,\tau)$ implying that $\displaystyle FIB_m(t)$ cannot hit the value zero first. Thus $\displaystyle FIB_m(t)> 0$ for all $\displaystyle t\geq 0.$ Similarly, we can show that neither of the state variables will hit the value zero in the interval $\displaystyle t=(t_0,t_f),$ implying that the state variables will remain positive for all time. 
\end{proof}
\section{Existence of equilibrium points}\label{sec}

In this section, we show the existence of steady states of system (\ref{2.10.1})-(\ref{2.10.6}). At steady states, 

$$\frac{dFIB_p}{dt}=\frac{dFIB_m}{dt}=\frac{dFIB_a}{dt}=\frac{dCOL}{dt}=\frac{dN}{dt}=\frac{dP}{dt}=0.$$
From equation (\ref{2.10.1}), we observe that $FIB_p=0$ or $$-\mu_{fib}+\frac{p_{fibp}}{[1+(N+N_\infty)^2]\left[1+\left(\frac{COL}{c_\infty}\right)^4\right]}-\frac{d_f}{1+(N+N_\infty)^2}-\frac{k_{fnp}N}{x_{fn}+N}=0.$$

From equation(\ref{2.10.2}), we observe that $FIB_m=0$ or $$-\mu_{fib}+\frac{p_{fibm}}{[1+(N+N_\infty)^2]\left[1+\left(\frac{COL}{c_\infty}\right)^4\right]}-\frac{d_f}{1+(N+N_\infty)^2}-\frac{k_{fnm}N}{x_{fn}+N}=0.$$

From Equation (\ref{2.10.6}), we observe that $P=0$ or $$o_{xy}\left[1-\frac{P}{P_\infty}\right]-\frac{k_{pm}s_{m}}{\mu_m+k_{mp}}-\frac{1.1k_{pn}N}{\left[1+0.1e^{-0.3[O_2-O_{crit}]}\right]\left[1+\left(\frac{FIB_a}{F_\infty}\right)^2\right]}=0.$$

When $\displaystyle FIB_p=FIB_m=P=0,$ then $$R_i=\frac{k_{np}P+k_{nn}N^3+k_{nw}WS}  {\left[1+\left(\frac{FIBp+FIBm+FIBa}{F_\infty}\right)^2\right]}$$

becomes $\displaystyle R_i=\frac{k_{nn}N^3+k_{nw}WS}{\left[1+\left(\frac{FIBa}{F_\infty}\right)^2\right]}.$ When the wound is healed, then the wound size $(WS)=0.$ This implies that
\begin{equation}R_i=\frac{k_{nn}N^3}{\left[1+\left(\frac{FIBa}{F_\infty}\right)^2\right]}=\frac{k_{nn}N^3F_\infty^2}{F_\infty^2+FIB_a^2}.\end{equation}

With $R_i$ so defined, and from Equation (\ref{2.10.5}), we obtain

$$s_{nr}k_{nn}N^3F_\infty^2-\left[\mu_{nr}(F_\infty^2+FIB_a^2)+k_{nn}N^3+F^2_\infty\right]\left[k_{nf}FIB_aNe^{-(T-T_0)^2}+\mu_nN\right]=0,$$\\

which gives $N=0$ or $s_{nr}k_{nn}N^2F_\infty^2-\left[\mu_{nr}(F_\infty^2+FIB_a^2)+k_{nn}N^3+F^2_\infty\right]\left[k_{nf}FIB_ae^{-(T-T_0)^2}+\mu_n\right]=0.$ If $N=0,$ Equation (\ref{2.10.3}) gives

\begin{equation}\label{gu}
	FIB_a[\mu_{fib}(1+n_\infty^2)(c_\infty^4+COL^4)-p_{fiba}c_\infty^4]=p_{bl}(1+n_\infty^2)(c_\infty^4+COL^4).
\end{equation} 

Equation (\ref{2.10.4}) can be written as 

\begin{align*}
	&\frac{k_{cf}FIB_a}{\left[x_{cf}+FIB_a\right]\left[1+\left(\frac{COL}{c_{f\infty}}\right)^{h_{fc}}\right]}-\frac{k_{cfr}FIB_a}{[x_{cfr}+FIB_a]\left[1+e^{-50(N_{crit}-N)}\right]}  \left[1-\frac{1.1}{1+\left(\frac{COL}{c_{fr\infty}}\right)^{12}}\right]\\   &-k_{cn}NCOLe^{-(T-T_0)^2}=0,\end{align*}

and when $N=0,$ it yields $FIB_a=0$ or 

\begin{equation}\label{gu1}
	\frac{k_{cf}c_{f\infty}^{h_{fc}}}{\left[c_{f\infty}^{h_{fc}}+COL^{h_{fc}}\right]}=\frac{k_{cfr}\left[COL^{12}-0.1C_{fr\infty}^{12}\right]}{\left[c_{fr\infty}^{12}+COL^{12}\right]\left[1+e^{-50N_{crit}}\right]}.
\end{equation}

$FIB_a=0$ implies $COL=0,$ which yields a trivial equilibrium point that we denote $$\mathcal{E}_0=\left[FIB_p^*,FIB_m^*,FIB_a^*,COL^*,N^*,P^*\right]=[0,0,0,0,0,0].$$

Eqn (\ref{gu}) gives the polynomial
\begin{equation}\label{gu2}
	COL^{12+h_{fc}}+\mathcal{A}_1COL^{12}-\mathcal{A}_2COL^{h_{fc}}-\mathcal{A}_3=0,
\end{equation}

where $\displaystyle\mathcal{A}_1=1-\left[1-\left(1+e^{-50N_{crit}}\right)k_{cf}c_{f\infty}^{h_{fc}}\right],$ $\displaystyle\mathcal{A}_2=0.1C_{fr\infty}^{12}$ and
\[\displaystyle\mathcal{A}_3=\frac{\left[1+e^{-50N_{crit}}\right]k_{cf}c_{f\infty}^{h_{fc}}c_{fr\infty}^{12}+0.1c_{f\infty}^{h_{fc}}c_{fr\infty}^{12}}{k_{cfr}}.\]

If we set $h_{fc}=1$ as articulated by \cite{segal2012differential}, then Equation (\ref{gu2}) becomes 

\begin{equation}\label{gu3}
	COL^{13}+\mathcal{A}_1COL^{12}-\mathcal{A}_2COL-\mathcal{A}_3=0.
\end{equation}

We observe that Equation (\ref{gu3}) has real coefficients and has only one sign change. By the Descartes rule of signs \citep{Descartes2017}, Equation (\ref{gu3}) has at least 1 positive real root, which we can denote as $COL^{**}.$ Now solving Equation (\ref{gu}), we get

\begin{align*}
	&FIB^{**}_a\left[\mu_{fib}\left(c_\infty^4+COL^{**4}+n^2_\infty c_\infty ^4+n^2_\infty COL^{**4}\right)-P_{fiba}c^4_\infty\right]\\ & = 
	p_{bl}\left[c_\infty^4+COL^{**4}+n^2_\infty c_\infty ^4+n^2_\infty COL^{**4}\right],   
\end{align*}

implying that

$$FIB_a^{**}= \frac{p_{bl}(1+n^2_{\infty})(c_\infty^4+COL^{**4})}{P_{fiba}c_\infty^4\left[\frac{\mu_{fib}(1+n_\infty^2)(c_\infty^{**4}+COL^{**4})}{P_{fiba} c_\infty^4}-1\right]},$$

simplifying to

$$FIB_a^{**}=\frac{p_{bl}(1+n_\infty^2)(c_\infty^4+COL^{**4})}{\mu_{fib}(1+n_\infty^2)(c_\infty^4+COL^{**4})\left[1-\frac{p_{fiba}c_\infty^4}{\mu_{fib}(1+n_\infty^2)(c_\infty^4+COL^{**4})}\right]}=\frac{p_{bl}}{\mu_{fib}[1-R_W]},$$

which is positive when $R_w<1,$ where \begin{equation}\label{rw}R_w=\frac{p_{fiba}c_\infty^4}{\mu_{fib}(1+n_\infty^2)(c_\infty^4+COL^{**4})}.\end{equation}

Thus the wound-free equilibrium point is given by 

$$\mathcal{E}_1=\left[FIB_p^{**},FIB_m^{**},FIB_a^{**},COL^{**},N^{**},P^{**}\right]=\left[0,0,\frac{p_{bl}}{\mu_{fib}[1-R_w]},COL^{**},0,0\right].$$

We observe that only activated fibroblast cells and collagen exist at the wound-free equilibrium point, as the body needs them to support normal tissue structure. Equation (\ref{rw}) also suggests a large collagen activity, a large fibroblast death rate ($\mu_{fib}$) and large inhibition constants, $n_\infty ~\text{and} ~c_\infty$ at the wound-free equilibrium. It is necessary to have a large $\mu_{fib}$ and a large $c_\infty$ for a large collagen activity to achieve normal collagen activity and avoid excessive deposition of collagen, a condition called scleroderma.

\section{Sensitivity analysis}

To gain insights into factors that influence the activity of fibroblast cells throughout the simulation period, we perform a global sensitivity/uncertainty analysis. Sensitivity analysis helps us to understand the effect of the complete parameter space on the output variable, in this case, the population density of fibroblast cells. The parameters are assumed to be uniformly distributed and are sampled using the Latin Hypercube Sampling (LHS) method, which is an efficient stratified Monte Carlo sampling that allows for simultaneous sampling of the multi-dimensional parameter space as fully outlined by \cite{Blower1994} and \cite{Hoare2008}. The simulation is carried out for a period of 30 days, with 1000 simulations per run. Partial rank correlation coefficients (PRCCs) are computed for each selected input parameter and the output variable during the simulation period. The magnitude and sign of the PRCC for each parameter are important as they determine the contribution of the parameter to the activity of fibroblast cells. The larger the positive PRCC value is or the smaller the negative PRCC value is, the more influence the parameter has on the activity of fibroblast cells \citep{Taylor1990}. Table \ref{tt1} shows the PRCC values for each parameter of our model, which indicates the correlation between the model parameters, and the population of fibroblast cells. PRCCs are considered significant if their $p-$values are less than $0.05.$ From Figure \ref{sy}, we observe that for proliferating fibroblast cells, the contact inhibition constant of FIBp reproduction by collagen $(C_{\infty}),$ is the most important factor as it has the highest contribution to the activity of proliferating fibroblast cells. The rate of production of collagen by active fibroblast cells $(k_{cf})$ and the rate of destruction of inflammatory cells by pathogens also contribute to an increase in the production of fibroblast cells, indicating a high level of uncertainty when these parameter values are changed. On the other hand, the fibroblast decay rate $(\mu_{fib})$ is highly negatively correlated to the activity of proliferating fibroblast cells, indicating that an increase in the values of this parameter results in a decrease in the relative activity of proliferating fibroblast cells. Pathogen growth under normal oxygen levels also contributes to a decrease in the activity of proliferating fibroblast cells, followed by the inhibition constant for contact inhibition in collagen degradation. Figure \ref{syfibm} shows the rate of production of collagen by active fibroblast cells, contact inhibition constant of active migrating fibroblast cells by collagen and rate of destruction of inflammatory cells by pathogens to have large positive PRCCs and decay rate of migrating fibroblast cells, inhibition exponent for contact inhibition in collagen production and inhibition constant of migrating fibroblast cells by inflammatory cells as factors with least PRCCs. A similar effect of these factors can also be observed in the activity of active fibroblast cells from Figure \ref{syfiba}. In the manufacture or administering of drug interventions, interventions that stimulate or increase the levels of parameters with positive PRCCs (sky-blue bars in Figures \ref{sy}, \ref{syfibm}, \ref{syfiba}) should be prioritized based on the order of importance. Similarly, factors with negative PRCCs (chocolate bars in Figures, \ref{sy}, \ref{syfibm}, \ref{syfiba}) contribute to slower healing of a wound. In section (\ref{sec}), we derived an important threshold denoted $(R_w)$ defined by Equation (\ref{rw}). We observed that when $R_w<1,$  the wound-free equilibrium point exists and is always positive. In order to maintain a wound-free equilibrium, we carry out a sensitivity analysis on $R_w$ and we identify important factors that influence this threshold. This analysis is important as it provides information that can be used in drug manufacture and administration. From Figure \ref{Fs}, we observe that the "contact inhibition constant of FIBx reproduction by collagen" ranks as the factor with the highest positive PRCC. If this factor is increased or continuously stimulated, then the wound-healing process is impaired. Similarly, the "reproductive rate of Fibroblast cells of type a" is positively correlated with $R_w,$ meaning the continued production of fibroblast cells of type a will increase $R_w$ in the process leading to prolonged healing of the wound. On the other hand, we observe that the "density of collagen at the wound-free equilibrium" has the highest negative PRCC. When this factor is increased, the value of $R_w$ is reduced, leading to an expedited healing process of the wound. In the design and manufacturing of drugs, if active ingredients accelerate the stimulation of collagen, then the wound-healing process is improved.

\begin{figure}[]
	\centering
	\includegraphics[width=1.0\linewidth]{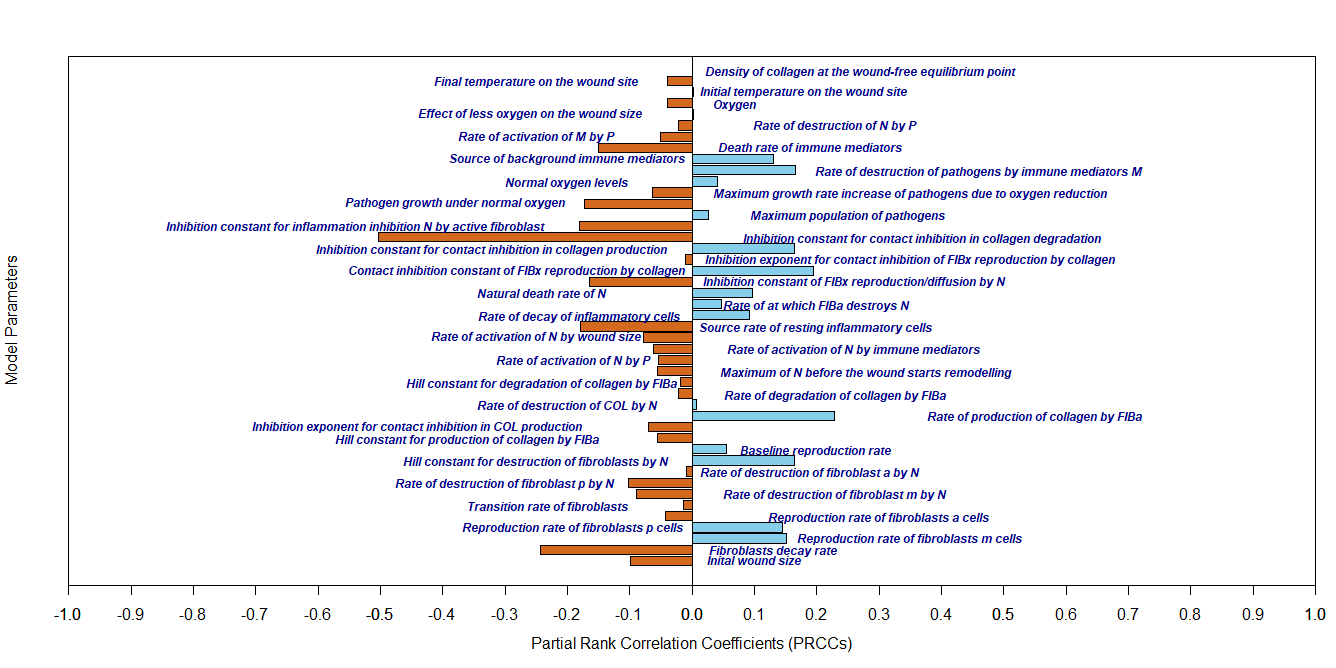}
	\caption{\begin{small}
			Tornado plot showing the partial rank correlation coefficients (PRCCs) of the input variables (model parameters), with respect to the proliferating fibroblast cells $(FIB_p).$ PRCCs are sampled using the Latin Hypercube Sampling (LHS) method, with 1000 simulations per run \citep{Blower1994}. 
	\end{small}}
	\label{sy}
\end{figure}

\begin{figure}[]
	\centering
	\includegraphics[width=1.0\linewidth]{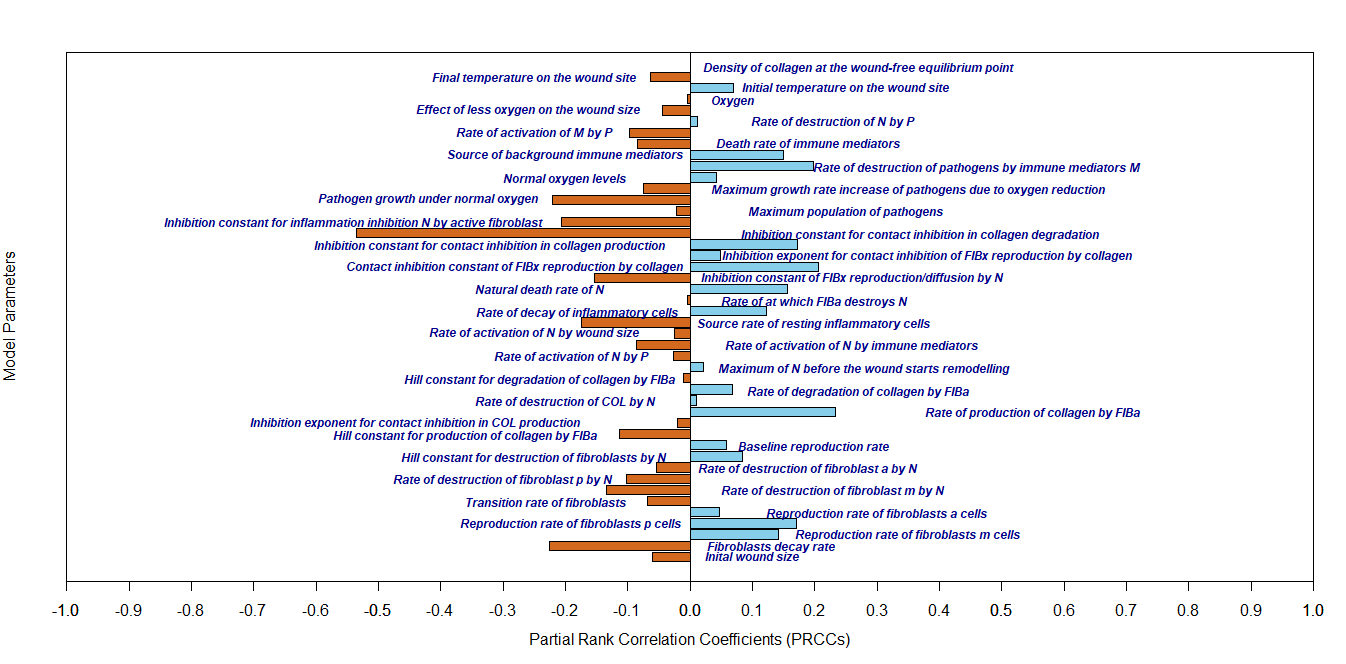}
	\caption{\begin{small}
			Tornado plot showing the partial rank correlation coefficients (PRCCs) of the input variables (model parameters), with respect to the migrating fibroblast cells $(FIB_m).$ PRCCs are sampled using the Latin Hypercube Sampling (LHS) method, with 1000 simulations per run \citep{Blower1994}
	\end{small}}
	\label{syfibm}
\end{figure}

\begin{figure}[]
	\centering
	\includegraphics[width=1.0\linewidth]{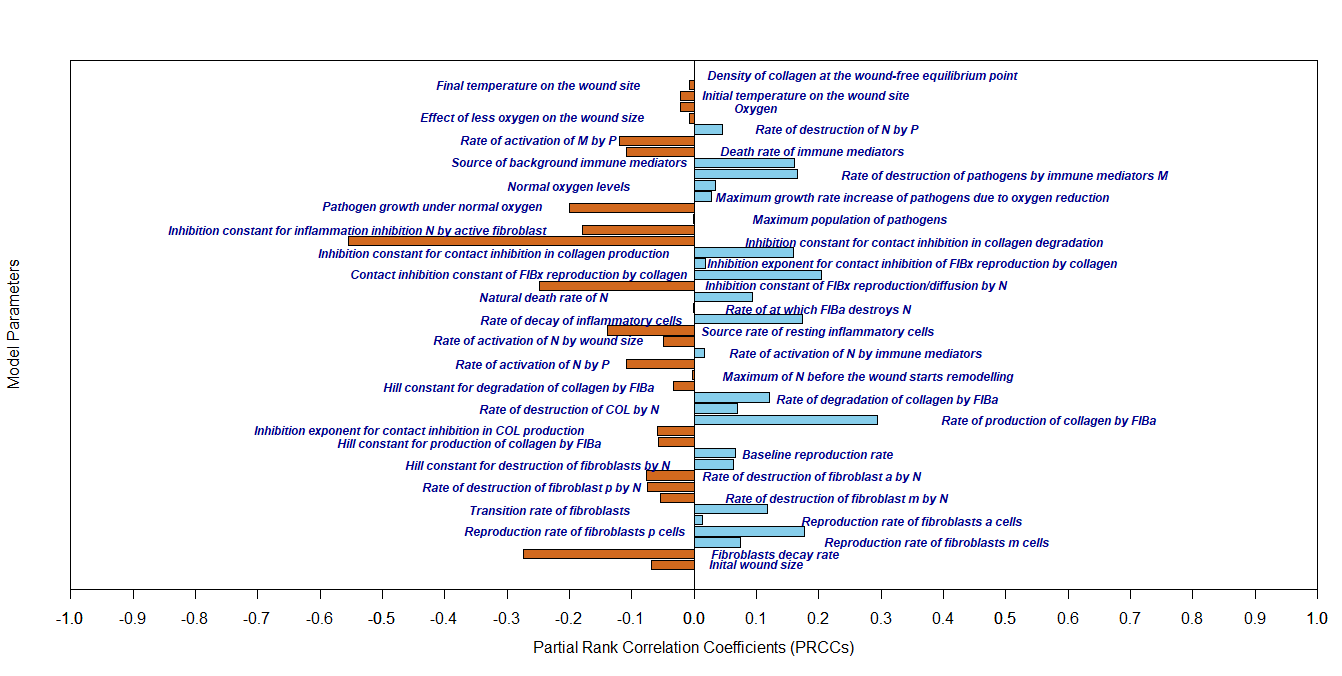}
	\caption{\begin{small}
			Tornado plot showing the partial rank correlation coefficients (PRCCs) of the input variables (model parameters), with respect to the active fibroblast cells $(FIB_a).$ PRCCs are sampled using the Latin Hypercube Sampling (LHS) method, with 1000 simulations per run \citep{Blower1994}
	\end{small}}
	\label{syfiba}
\end{figure}

\begin{table}[httb]
	\renewcommand{\arraystretch}{2.9}
	\small
	\centering
	\caption{Partial rank correlation coefficients for the sensitivity analysis.}
	\begin{adjustbox}{width=0.75\textwidth}
		\begin{tabular}{|c@{\hspace{1cm}}|c@{\hspace{1cm}}c@{\hspace{1cm}}c@{\hspace{1cm}}c@{\hspace{1cm}}|c@{\hspace{1cm}}c@{\hspace{1cm}}c@{\hspace{1cm}}c@{}|}
			\hline
			& &PRCCs&&&&$P-$ Values&&\\
			\hline
			Parameter& FIBp & FIBm&FIBa&$R_w$&FIBp & FIBm&FIBa&$R_w$\\
			\hline
			$WS_0$ & -0.09867 & -0.06092 & -0.06817 &0.07765 & $1.947 \times 10^{-3}$ & $1.563\times 
			10^{-3}$ &$2.444\times 10^{-4}$ & $1.926\times 10^{-3}$ \\ 
			$\mu_{fib}$ &  -0.24388 & -0.22603 & -0.27434 & -0.75356&  $1.385\times 10^{-3}$&$3.557\times 10^{-4}$ & $1.077\times 10^{-3}$&$5.636 \times 10^{-4}$   \\ 
			$P_{fibm}$ & 0.15053 & 0.14230 &0.07476 &-0.03683 & $1.731\times 10^{-3}$ & $1.481\times 10^{-3}$ &$2.772\times 10^{-5}$&$1.781\times 10^{-3}$  \\ 
			$P_{fibp}$ & 0.14405 & 0.16996 &0.17700 &-0.02578 &$1.678\times 10^{-3}$  &$7.192\times 10^{-4}$ &$4.301\times 10^{-4}$& $1.226\times 10^{-3}$ \\ 
			$P_{fiba}$ & -0.04277 & 0.04771 &0.01399 & 0.74508 & $6.812\times 10^{-4}$ &$2.233\times 10^{-4}$&$9.419\times 10^{-4}$ & $3.795\times 10^{-4}$  \\ 
			$d_f$ & -0.01399 & -0.06810 & 0.11765 &0.02959&$ 4.870\times 10^{-4}$  &$5.063\times 10^{-4}$ &$1.547\times 10^{-3}$ & $8.358\times 10^{-4}$\\ 
			$k_{fnm}$ & -0.08931 & -0.13465 &-0.05400 & -0.04281 &$1.162\times 10^{-3}$ & $1.300\times 10{-3}$&$5.063\times 10^{-4}$&$5.709\times 10^{-4}$  \\ 
			$k_{fnp}$ & -0.10283 & -0.10277 & -0.0743 &0.05604& $2.576\times 10^{-4}$& $1.789\times 10^{-3}$&$1.426\times 10^{-3}$&$2.589\times 10^{-3}$  \\ 
			$k_{fna}$ & -0.00947  & -0.05397 &-0.07692&0.01939&$2.684\times 10^{-4}$  &$6.435\times 10^{-4}$ &$4.339\times 10^{-4}$ &$3.385\times 10^{-4}$  \\ 
			$x_{fn}$ & 0.16487 &  0.08310 & 0.06378 & -0.00865&$1.179\times 10^{-3}$ &$1.329 \times 10^{-3}$ &$2.429\times 10^{-4}$&$9.428\times 10^{-4}$ \\ 
			$P_{bl}$ & 0.05538 & 0.05751 &0.06629 &0.00545 &$5.658\times 10^{-4}$  & $7.873\times 10^{-4}$ &$1.405\times 10^{-3}$& $2.367\times 10^{-4}$ \\ 
			$x_{cf}$ & -0.05621 & -0.11381 &-0.05725&-0.01416 &$1.343\times 10^{-3}$ &$6.316\times 10^{-4}$&$1.428\times 10^{-3}$&$6.119\times 10^{-5}$  \\ 
			$h_{fc}$ & -0.06992 & -0.02031 &-0.05880 &-0.00240 & $1.305\times 10^{-3}$&$1.093\times 10^{-3}$ &$2.546\times 10^{-4}$ &$6.394\times 10^{-4}$  \\ 
			$k_{cf}$ & 0.22853 & 0.23347 & 0.29336& -0.02938 &  $7.535\times 10^{-4}$ &$1.413\times 10^{-3}$ &$ 2.025\times 10^{-3}$&$5.097\times 10^{-5}$  \\ 
			$k_{cn}$ & 0.00747 & 0.01064 & 0.06959  & -0.00355 &$9.616\times 10^{-4}$ &$1.050 \times 10^{-4}$ &$2.848\times 10^{-3}$ &$1.494\times 10^{-3}$ \\ 
			$k_{cfr}$ & -0.02162 & 0.06748 & 0.12020 &0.04583& $1.171\times 10^{-3}$ &$4.349\times 10^{-4}$ &$5.255\times 10^{-4}$ &$1.336\times 10^{-3}$  \\ 
			$x_{cfr}$ & -0.01818 & -0.01084 & -0.03382 & 0.00145& $3.110\times 10^{-4}$&$8.742\time 10^{-4}$ &$9.761\times 10^{-4}$ & $2.076\times 10^{-3}$\\ 
			$N_{crit}$ & -0.05514 & 0.02169 & -0.00306 &-0.00181 &$1.799\times 10^{-3}$&$8.259\times 10^{-5}$ &$ 1.226\times 10^{-3}$ &$2.219\times 10^{-4}$\\ 
			$k_{np}$ & -0.05348 & -0.02654 & -0.10857 &-0.03139 &$1.046\times 10^{-3}$  &$2.216\times 10^{-3}$ &$8.319\times 10^{-4}$ &$4.372\times 10^{-4}$\\ 
			$k_{nn}$ & -0.06270 & -0.08609 & 0.01685 & -0.03661 & $3.488\times 10^{-4}$& $1.426\times 10^{-3}$ &$4.652\times 10^{-4}$ & $5.024\times 10^{-4}$\\ 
			$k_{nw}$ & -0.07879 & -0.02510 & -0.04957 &-0.02993&$2.506\times 10^{-4}$ & $1.350\times 10^{-3}$ &$2.613\times 10^{-4}$ & $1.369\times 10^{-3}$  \\ 
			$s_{nr}$ & -0.17893 & -0.17430 & -0.13907 & 0.03006& $1.682\times 10^{-3}$&$4.601\times 10^{-4}$ &$1.339\times 10^{-3}$&$1.615\times 10^{-3}$ \\ 
			$\mu_{nr}$ & 0.09213 & 0.12277 & 0.17378 &0.04624  &$8.096\times 10^{-5}$  &$1.001\times 10^{-3}$&$1.851\times 10^{-3}$ & $1.436\times 10^{-3}$ \\ 
			$k_{nf}$ & 0.04648 & -0.00371 &-0.00100 &-0.02142 &$5.947\times 10^{-4}$ &$1.385\times 10^{-3}$ &$ 1.058\times 10^{-3}$ & $9.612\times 10^{-4}$\\ 
			$\mu_n$ & 0.09691 & 0.15627 & 0.09301  &-0.03560 &$1.505\times 10^{-3}$ &$9.161\times 10^{-4}$ &$2.618\times 10^{-5}$ &$6.924\times 10^{-4}$ \\ 
			$n_\infty$ & -0.16515 & -0.15408 &-0.24738 & -0.40913 &$4.570\times 10^{-4}$ &$6.897\times 10^{-4}$ &$8.571\times 10^{-4}$ &$5.695\times 10^{-4}$  \\ 
			$c_\infty$ & 0.19379 & 0.20600  &  0.20461 &0.97729  &$1.355\times 10^{-3}$  &$8.674\times 10^{-4}$ &$6.526\times 10^{-4}$& $6.797\times 10^{-5}$\\ 
			$h_c$ & -0.01071 & 0.04882 & 0.01862 &-0.02614 & $1.749\times 10^{-4}$  &$4.804\times 10^{-4}$ &$7.905\times 10^{-4}$& $1.645\times 10^{-4}$  \\ 
			$c_{f_\infty}$ & 0.16431 & 0.17143 & 0.15863 &-0.02076&$1.722\times 10^{-3}$ &$1.422\times 10^{-3}$ &$7.453\times 10^{-5}$ &$1.433\times 10^{-5}$ \\ 
			$c_{fr_\infty}$& -0.50249 & -0.53559 & -0.55395  &0.01476 &$8.677\times 10^{-4}$  &$2.007\times 10^{-4}$&$6.980\times 10^{-4}$& $7.806\times 10^{-5}$\\ 
			$c_{fr_\infty}$& -0.18005 & -0.20618 & -0.17916 & -0.03991 &$5.111\times 10^{-4}$  &$2.106\times 10^{-3}$&$1.322\times 10^{-3}$ & $1.794\times 10^{-4}$\\ 
			$F_\infty$ & 0.02681 & -0.02193 &-0.00168& -0.01756  &$2.947\times 10^{-4}$ &$1.947\times 10^{-4}$ &$1.796\times 10^{-3}$& $3.793\times 10^{-4}$ \\ 
			$P_\infty$ & -0.17221 & -0.22158 & -0.19981 &-0.02237 &$1.703\times 10^{-3}$ &$5.279\times 10^{-4}$&$1.834\times 10^{-4}$ & $5.908\times 10^{-5}$ \\ 
			$k_{pgo}$ & -0.06425 & -0.07569 & 0.02787 &-0.04016& $1.167\times 10^{-3}$ &$5.404\times 10^{-4}$ &$5.418\times 10^{-4}$&$2.580\times 10^{-4}$  \\ 
			$\beta_p$ & 0.04008 & 0.04181 & 0.03435 &0.01007 & $1.120\times 10^{-3}$ &$2.816\times 10^{-4}$ &$1.719\times 10^{-3}$&$8.404\times 10^{-4}$  \\ 
			$O_{crit}$ & 0.16621 & 0.19784 & 0.16629 &-0.01851 &$1.310\times 10^{-3}$ &$5.075\times 10^{-5}$ & $5.823\times 10^{-4}$& $6.344\times 10^{-4}$   \\ 
			$k_{pm}$ & 0.12979 & 0.14933 & 0.16146 &0.02416 &$1.804\times 10^{-3}$ &$2.558\times 10^{-5}$&$9.782\times 10^{-5}$ &$1.171\times 10^{-4}$ \\ 
			$s_m$ & -0.14993 &  -0.08384 & -0.10824 & -0.03805& $1.864\times 10^{-3}$ & $1.098\times 10^{-3}$&$3.651\times 10^{-4}$ & $4.047\times 10^{-4}$\\ 
			$\mu_M$ & -0.05016 & -0.09668 & -0.11939 &-0.04188&$9.971\times 10^{-4}$ & $1.634\times 10^{-4}$&$3.150\times 10^{-4}$ & $5.715\times 10^{-4}$\\ 
			$k_{mp}$ & -0.02162 & 0.01094 & 0.04617 &0.01713 &$3.076\times 10^{-4}$  &$3.636\times 10^{-3}$ & $2.467\times 10^{-4}$&$1.111\times 10^{-3}$ \\ 
			$k_{pn}$ & 0.00274 & -0.04414 & -0.00743 &-0.00713 &$1.570\times 10^{-3}$ &$7.871\times 10^{-4}$&$2.268\times 10^{-4}$ &$1.361\times 10^{-4}$  \\ 
			$O_2$ & -0.04025 & -0.00443 & -0.02163 & 0.01384& $1.355\times 10^{-3}$&$9.872\times 10^{-4}$  &$8.208\times 10^{-4}$ & $2.065\times 10^{-4}$ \\ 
			$T_O$ & 0.00188 & 0.06930 &-0.02238  & -0.04302&$4.878\times 10^{-4}$ &$5.122\times 10^{-4}$ &$1.186\times 10^{-3}$ &$6.406\times 10^{-4}$ \\ 
			$T$ & -0.03930 & -0.06356  & -0.00787 & 0.04653&$1.103\times 10^{-3}$ &$9.170\times 10^{-5}$ &$ 1.943\times 10^{-4}$ &  $1.937\times 10^{-3}$ \\ 
			$COL^{**}$&-&-&-&--0.89183& -& -&- &- $2.844\times 10^{-4}$ \\
			\hline
		\end{tabular}
	\end{adjustbox}
	\label{tt1}
\end{table}

\begin{figure}[]
	\centering
	\includegraphics[width=1.0\linewidth]{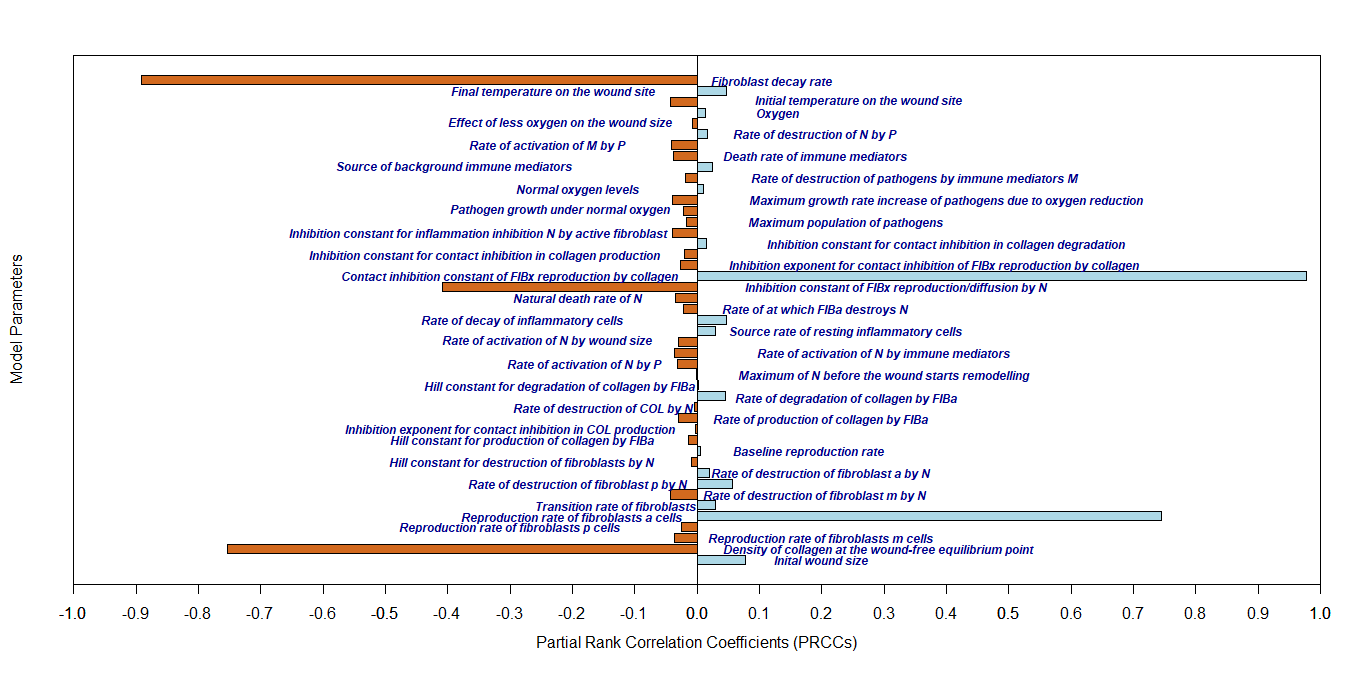}
	\caption{\begin{small}
			Tornado plot showing the partial rank correlation coefficients (PRCCs) of the input variables (model parameters), with respect to the threshold $(R_w).$ PRCCs are sampled using the Latin Hypercube Sampling (LHS) method, with 1000 simulations per run \citep{Blower1994}
	\end{small}}
	
	\label{Fs}
\end{figure}

\section{Numerical simulations}\label{rd}
In this section, we present and analyze the numerical simulations conducted on the model (\ref{2.10.1})-(\ref{2.10.6}). These simulations employ the TensorFlow library in Python, and the corresponding implementation can be found in the repository provided here (\cite{Alinafegithublink}). The purpose of these simulations is to enhance our comprehension of the dynamics within various cell activities and their influence on the wound healing process. The parameter values used in the numerical simulations are specified in Table \ref{long1}, unless noted otherwise.

\subsection{Oxygen on inflammatory cells and pathogen growth}
Oxygen plays a big role in the growth of pathogens. The growth rate of pathogens depends on the oxygenation of the tissue cells on the wound as modelled in equations (\ref{o2andpath}). Pathogen growth rate increases for oxygen levels less than critical $(25)$, and the growth rate remains constant $(0.55)$ for levels at and above critical level. 
\begin{figure}[]
	\centering
	\includegraphics[width=1.0\linewidth]{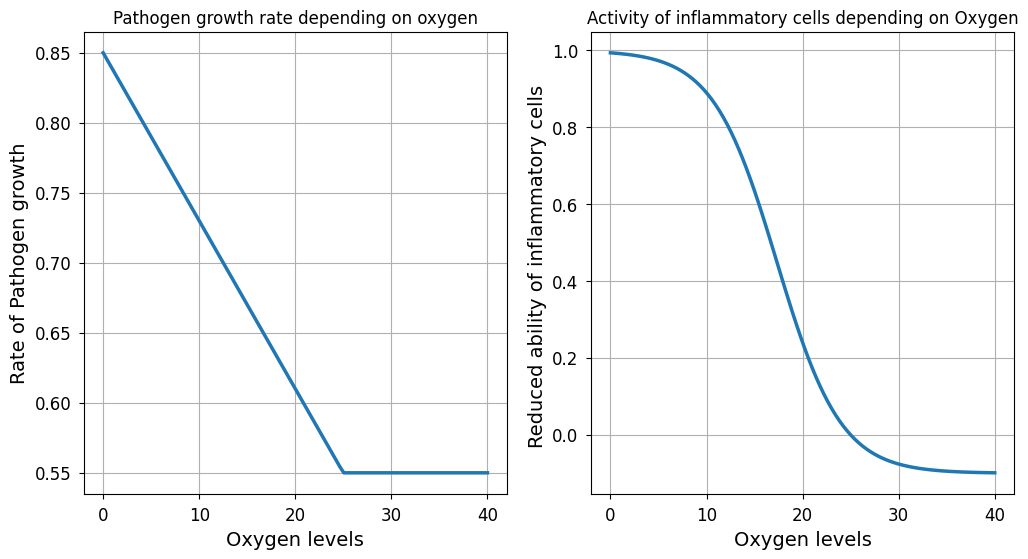}
	\caption{\begin{small}
			Pathogen growth rate under different oxygen levels (\textit{Left}) and the decreased ability of inflammatory cells under different oxygen levels (\textit{Right}).
	\end{small}}
	
	\label{Fig3.1}
\end{figure}
Oxygen also plays an important role in regulating the ability of inflammatory cells. Equation (\ref{oxy}) models the decreased abilities of inflammatory cells in different oxygen levels. Figure \ref{Fig3.1} (Right) indicates a decrease in the activity of inflammatory cells in high-oxygenated environments. This means increased and prolonged inflammation in less oxygenated wounds. On the other hand, we observe decreased activities of inflammatory cells in high-oxygenated wound tissues. This means an increased activity of inflammatory cells in well-oxygenated wounds. 

\subsection{Effect of inflammation on fibroblast cells}
Figure \ref{Fig3.2} explains the hyperinflammatory response to the activity of fibroblast cells. The activity of proliferating and active fibroblast cells decreases in the presence of inflammation and there will be an increased activity of the fibroblast cells in the absence of inflammation. This accurately reflects the biology of the wound healing process in which the activity of fibroblast cells is inhibited by the activity of inflammatory cells and so we have an increased activity of fibroblast cells in abscence of inflammatory response. 

\begin{figure}[]
	\centering
	\includegraphics[width=1.0\linewidth]{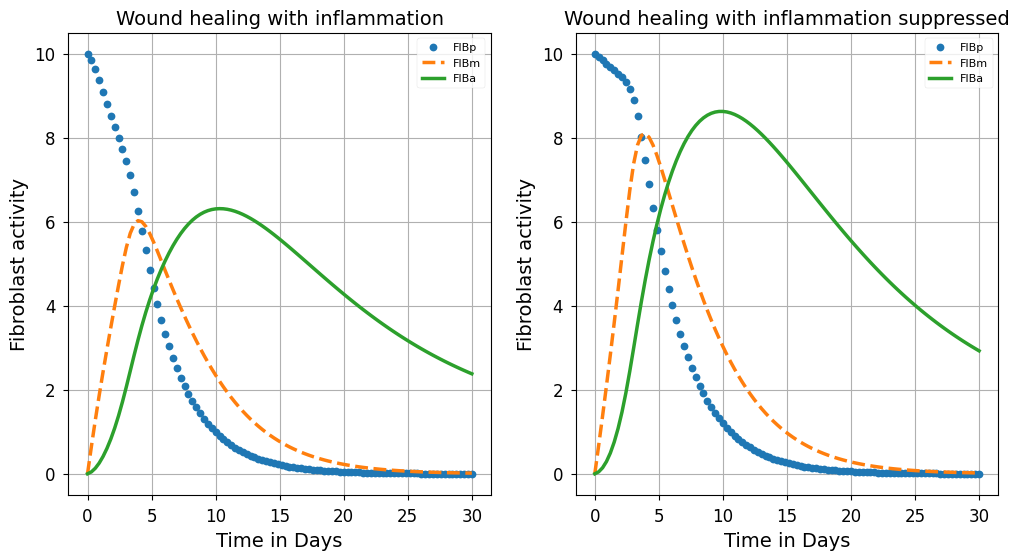}
	\caption{\begin{small}
			Fibroblast activity with inflammatory response (\textit{Left}) and without inflammatory response (\textit{Right}): Setting initial conditions $FIBp=10, P_0=0.5,~$and the other variables set to zero.
	\end{small}}
	
	\label{Fig3.2}
\end{figure}

\subsection{Collagen, inflammation, and pathogens in a normal healing wound}
Figure \ref{Fig3.3} presents collagen and pathogen activities in a normal wound healing process. There is a rise of collagen (Left) activity from day $0\;$ to around day $6\;$ from which there is a constant activity. 

The presence of an initial wound will activate the recruitment of pathogens on the wound, setting an initial wound size at $0.5,\;$ the body responds with inflammation to destroy the pathogens. Normally, pathogens will be cleared around day $8\;$ of the initial wound.
\begin{figure}[]
	\centering
	\includegraphics[width=1.0\linewidth]{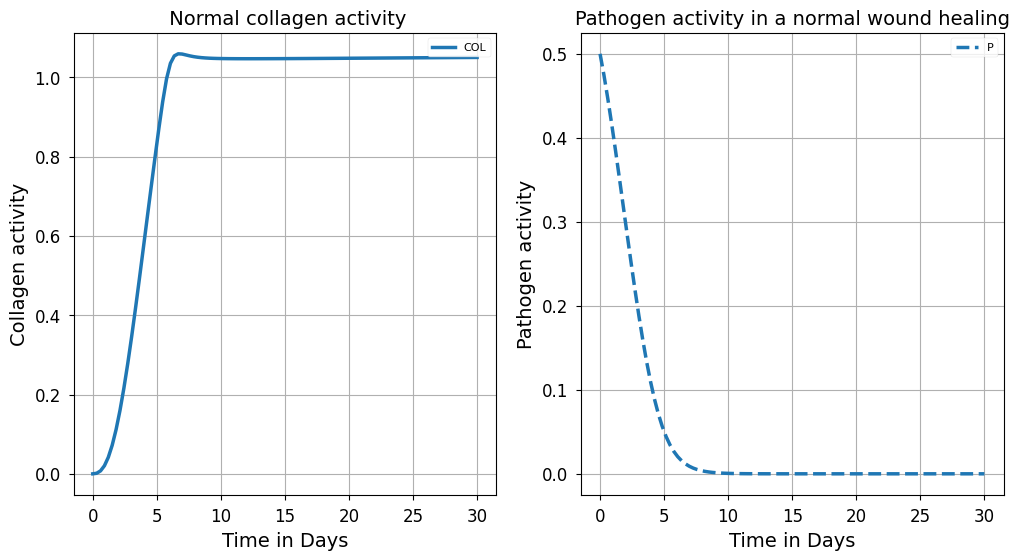}
	\caption{\begin{small}
			Collagen (\textit{Left}) and pathogen (\textit{Right}) activities during a normal wound healing process.
	\end{small}}
	
	\label{Fig3.3}
\end{figure} 
\begin{figure}[]
	\centering
	\includegraphics[width=1.0\linewidth]{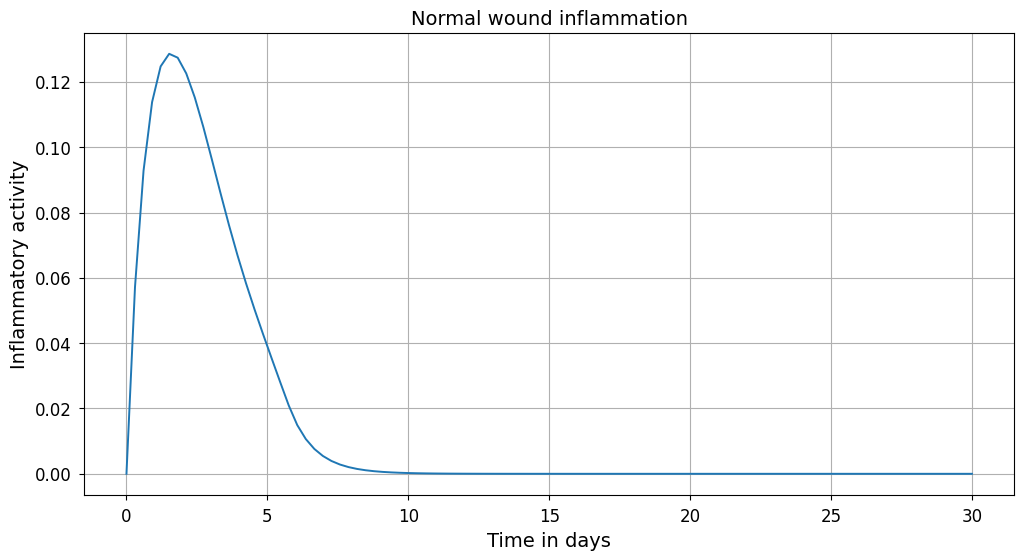}
	\caption {\begin{small}
			The activity of inflammatory cells in a healing wound.
	\end{small}}
	
	\label{Fig3.4}
\end{figure}The body responds to the pathogens on the wound by producing inflammatory cells. This response is activated a few hours after tissue damage. The absence of pathogens and damaged tissue cells means there is no activity of inflammatory cells. The activity of inflammatory cells is at maximum after around $3\;$ days of the wound after which it decreases to zero around day $8$.

\subsection{Effects of decreased oxygen on fibroblast cells}
The model was used to investigate the effect of oxygen reduction on the activity of fibroblast cells. Figure \ref{Fig3.5} compares fibroblast activities in well-oxygenated and oxygen-deprived environments. There is a clear indication of a reduced activity of fibroblast cells in less oxygenated tissue cells. In normal oxygenated wound surfaces, migrating and active fibroblast cells have a maximum relative activity of around $6.4~ (\text{around day 4})~ \text{and} ~7~ (\text{around day 10})~$ respectively and a maximum relative activity of around $5.1~ (\text{around day 4})~ \text{and} ~5.7~ (\text{around day 10})~$ in an oxygen reduced environment. There is also an observed decreased activity of migrating fibroblast cells within the first $5$ days.
\begin{figure}[]
	\centering
	\includegraphics[width=1.0\linewidth]{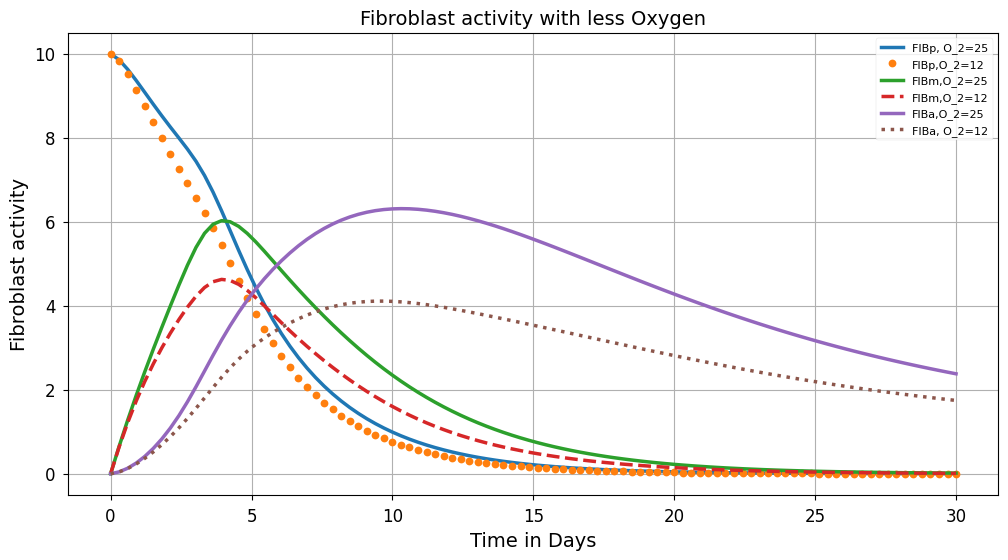}
	\caption{\begin{small}
			Effect of reduced oxygen on fibroblast activity: $O_2=12, FIB_p=10, P_0=0.5.~$ All the other variables ($FIB_m, FIB_a, COL, N$) were set to zero.
	\end{small}}
	
	\label{Fig3.5}
\end{figure}

\subsection{Effect of oxygen on collagen, pathogen and inflammatory  activities}
In an oxygen-depleted environment, the activities of collagen and pathogens (Figure \ref{Fig3.6}) are also affected. In less oxygenated tissues, there is a small decrease of collagen activity between the first $15$ days. However, oxygen greatly affects the growth of pathogens in the wound. As shown in Figure \ref{Fig3.1} (\textit{Left}), less oxygen levels increase the rate of pathogen growth and keep oxygen levels on or above normal levels (25), keeping the rate of growth of pathogens at constant (0.55). This pathogen growth dynamics directly correlates with the relative activity of pathogens on the wound. In normal oxygen levels, pathogen activity is cleared on around day $8~$ while in oxygen-depleted wound environments, there is active pathogen activity up to around day $19$. This can lead to prolonged wound recovery due to increased tissue damage and hence cause chronic wounds.
\begin{figure}[]
	\centering
	\includegraphics[width=1.0\linewidth]{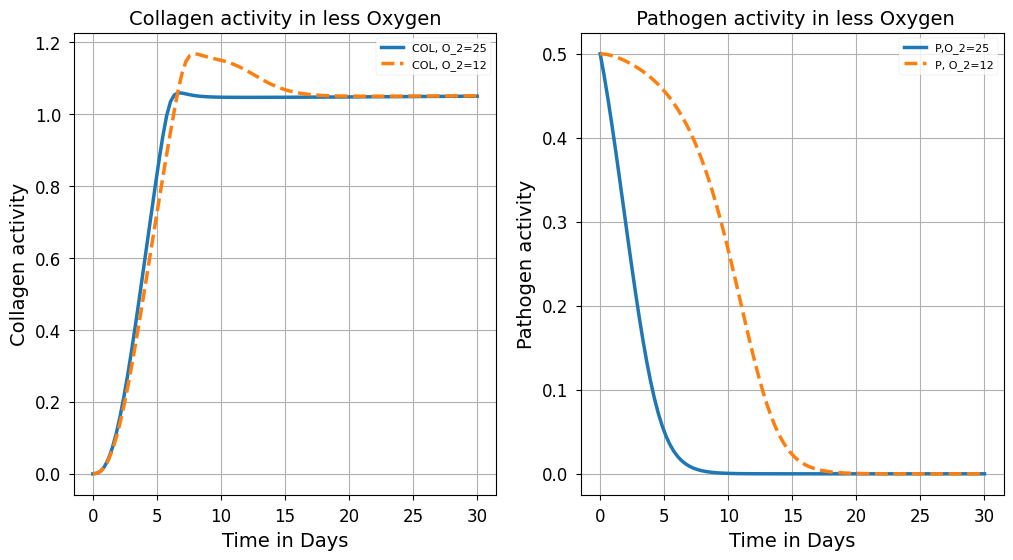}
	\caption{\begin{small}Response of collagen (\textit{Left}) and pathogen activity (\textit{Right}) to less oxygen: $O_2=12, FIB_p=10, P_0=0.5.~$ All the other variables ($FIB_m, FIB_a, COL, N$) were set to zero.
	\end{small}}
	
	\label{Fig3.6}
\end{figure}Figure \ref{Fig3.7} explains the activity of inflammatory cells in less oxygenated wounds. We observe an activity of inflammatory cells up to around day $17~$ and the activity is not smooth from around day $9$. This causes instability in the swelling of the wound. This prolonged and unstable inflammatory activity is also a driver of a chronic wound, thus, less oxygenated acute wounds may result in chronic wounds. Unlike under normal oxygenation levels where inflammatory activity is up to day $9~$.
\begin{figure}[]
	\centering
	\includegraphics[width=1.0\linewidth]{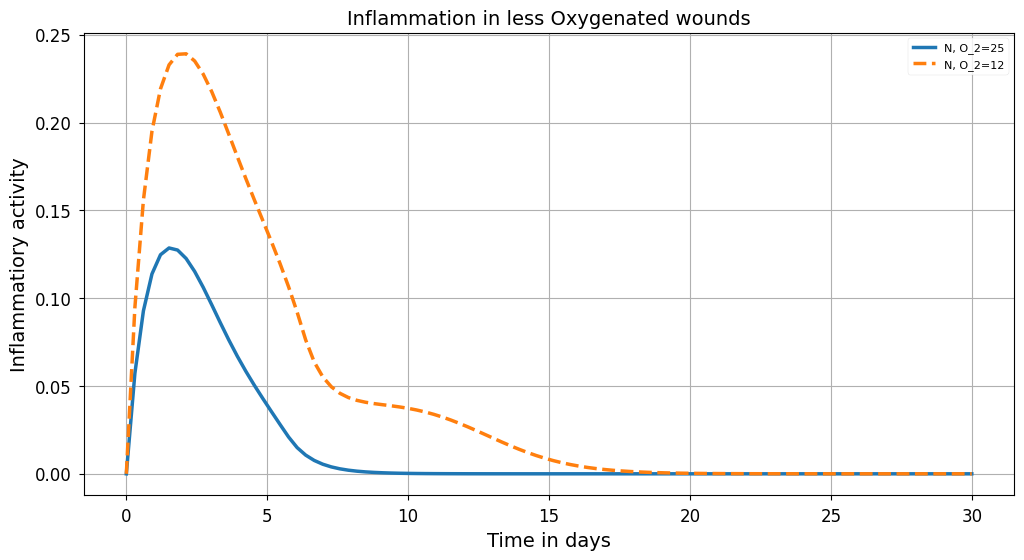}
	\caption{\begin{small}
			Response of activity of inflammatory cells to less oxygen: $O_2=12, FIB_p=10, P_0=0.5.~$ All the other variables ($FIB_m, FIB_a, COL, N$) were set to zero.
	\end{small}}
	
	\label{Fig3.7}
\end{figure} 

\subsection{Unclean wounds}
Figure \ref{Fig3.8} shows fibroblast activity where there are high levels of pathogen activity. In this case, we have maximum activity of migrating fibroblast cells at $6.2~$ and a maximum activity of around $6.3~$ for active fibroblast cells, from the normal maximum activities of around $6.4$ and $7$ for migrating and active fibroblast cells respectively. This means high levels of pathogens cause a small reduction of the activities of fibroblast cells. 

\begin{figure}[]
	\centering
	\includegraphics[width=1.0\linewidth]{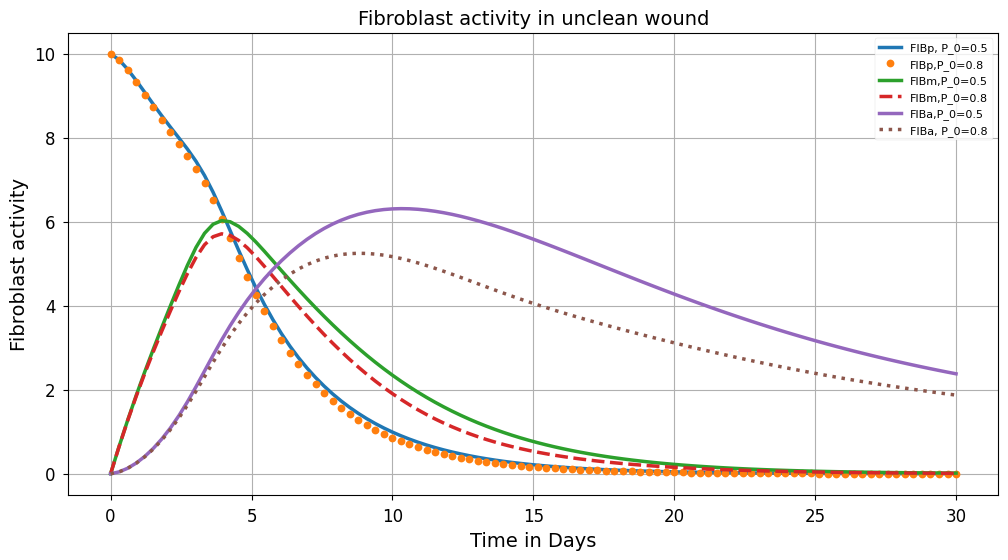}
	\caption{\begin{small}
			Activity of fibroblast cells in high levels of pathogen activity: $FIB_p=10, P_0=0.8.~$ All the other variables ($FIB_m, FIB_a, COL, N$) were set to zero. Oxygen levels are set at $25$.
	\end{small}}
	
	\label{Fig3.8}
\end{figure} 

The inhibition of active fibroblast cell activity explained by increased pathogen activity will also impact the activity of collagen as shown in Figure \ref{Fig3.9} (\textit{Left}) as compared to smooth collagen activity in clean wounds. 

\begin{figure}[]
	\centering
	\includegraphics[width=1.0\linewidth]{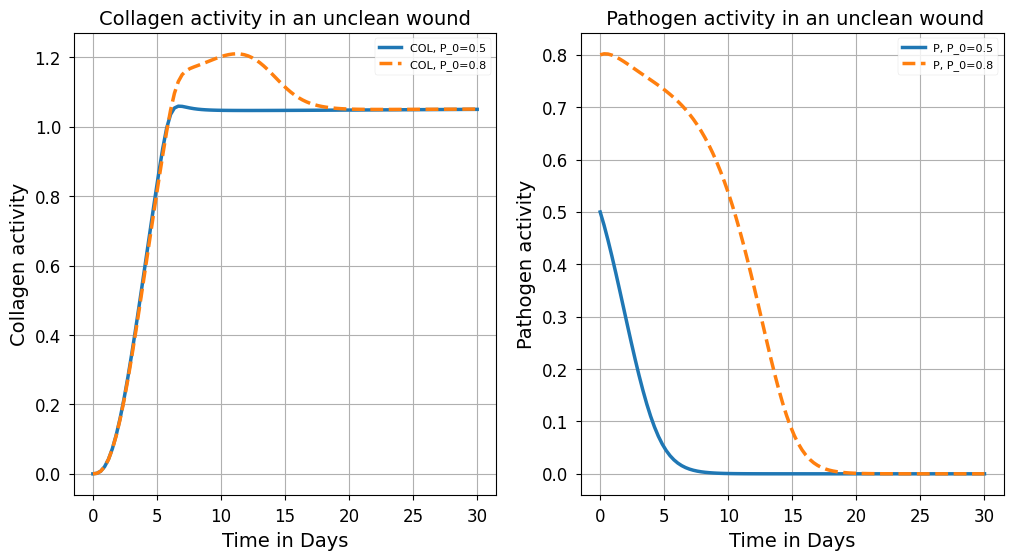}
	\caption{\begin{small}
			Collagen (\textit{Left}) and pathogen (\textit{Right}) activities in high pathogen activity levels: $FIB_p=10, P_0=0.8.~$ All the other variables ($FIB_m, FIB_a, COL, N$) were set to zero. Oxygen levels are set at $25$.
	\end{small}}
	
	\label{Fig3.9}
\end{figure}
Figure \ref{Fig3.9} (\textit{Right}) presents pathogen activity in unclean wounds. It is shown that high levels of pathogen activity  increases the time frame for pathogen activity cells to clear. In normal pathogen activity levels, it takes only around $9$ days to clear pathogen activity on the wound site unlike in the case of high pathogen activity levels, where it takes around $20$ days to clear pathogen activity. This is another indicator of delayed healing and a possible chronic wound.
\begin{figure}[]
	\centering
	\includegraphics[width=1.0\linewidth]{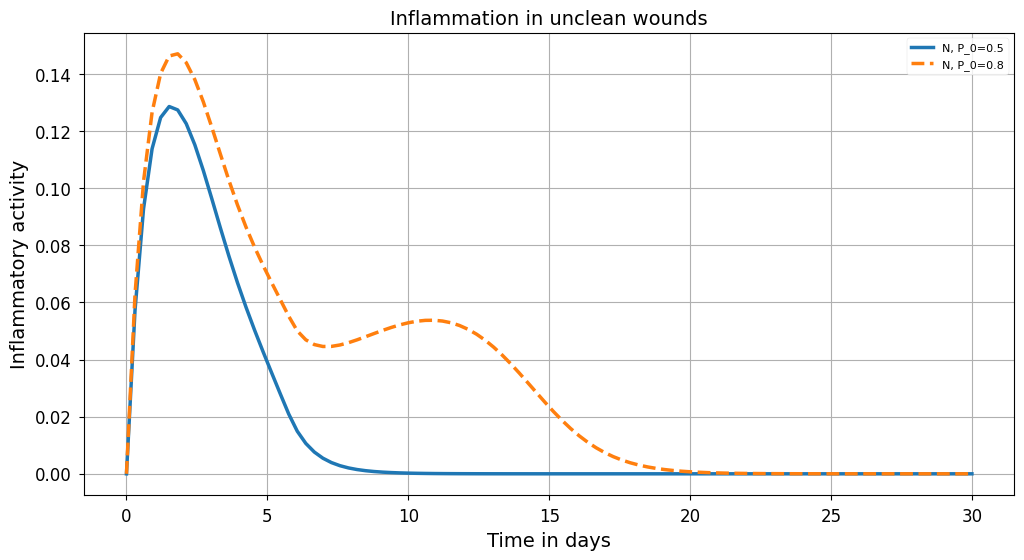}
	\caption{\begin{small}
			Activity of inflammatory cells under high levels of pathogen activity: $FIB_p=10, P=0.8.~$ All the other variables ($FIB_m, FIB_a, COL, N$) were set to zero. Oxygen levels set at $25$.
	\end{small}}
	
	\label{Fig3.10}
\end{figure}

The activity of inflammatory cells is also affected when there is increased pathogen activity. As shown in Figure \ref{Fig3.10} above, the activity of inflammatory cells is not smooth. This is a cause of irregular swelling which is also an indicator of a chronic wound. It is also shown that it takes around $20~$ days of activity of inflammatory cells while it takes around $8~$ days when the wound site is kept clean. 

\subsection{Acute and chronic wounds}
The model was used to understand cell activities in acute and chronic wounds. In Figure \ref{Fig3.11}, (a) and (b), shows cells activities in an acute wound (wound size set at $0.6$) and (c) and (d) explains cell activities in a chronic wound (represented by setting wound size at 0.8). There is observable increased inhibition of activities of migrating and active fibroblast cells in chronic wounds (c) than in acute wounds (a). There is also a reduction in the activity of collagen activity in chronic wounds (d) than in acute wounds (b). The activity of pathogens is the same assuming infection-free acute and chronic wounds. However, there is a greater increase in the activity of inflammatory cells in chronic wounds (d) than in acute wounds(b). Although there is normal pathogen activity, the increased inflammatory response may also be caused by other factors associated with chronic wounds such as diseases, for example, diabetes and vascular diseases, and stress which in turn cause an inbalance of cell activities on the wound site. This increased activity of inflammatory cells in (d) explains the increased inhibition of collagen activity which is a result of reduced recruitment of fibroblast cells, a major contributing factor of most chronic wounds, as shown in (c) which are a source of collagen protein.     
\begin{figure}[]
	\centering
	\includegraphics[width=1.0\linewidth]{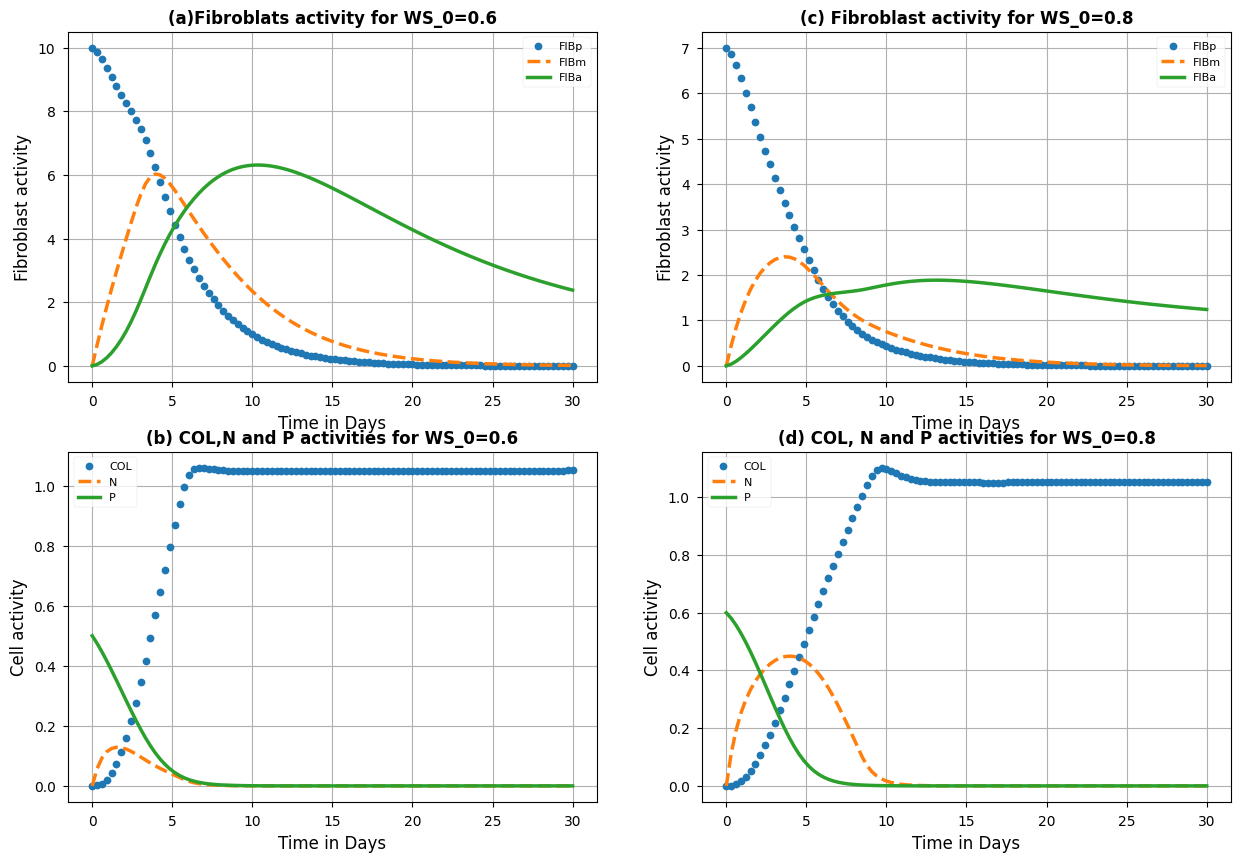}
	\caption{\begin{small}Healing behaviour of an acute wound (a) and (c) and a chronic  wound (b) and (d) assuming no increased pathogen infection in both cases. Setting  $WS_0=0.6$ for an acute wound and $WS=0.8,FIB_\text{p}=7.0, FIB_\text{m}=0, FIB_\text{a}=0, COL=0, N=0, P=0.6$ for chronic wounds.
	\end{small}}
	
	\label{Fig3.11}
\end{figure}

\subsection{Effect of systemic factors on wound healing process}
A study by  \cite{guo2010factors} on factors affecting wound healing classifies age, stress, diabetes, obesity, medication, sex hormones in aged individuals, alcohol consumption, smoking and nutrition as systemic factors that affect a wound healing process. This study presents effects of stress, age, alcohol consumption and smoking on wound healing process.

 \cite{swift2001age} finds that aged individuals have a delayed wound healing process due to reduced production of inflammatory cells. There are similar results on the study of effects of stress on wound healing process by  \cite{boyapati2007role}. Stress causes production of hormones which reduces the production of specific inflammatory cells and mediators.  \cite{guo2010factors} summarizes the effects of alcohol and smoking to the wound healing process. According to their study, alcohol consumption reduces resistance of inflammatory cells to pathogens and delays angiogenesis and chemical compounds in smoke like nicotine and carbon dioxide reduces tissue oxygenation. This reduces production of specific inflammatory cells and affect fibroblast migration. In general, ageing, stress, alcoholism and smoking reduces the production rate of resting inflammatory cells and proliferating fibroblast cells. These effects of the systemic factors are considered in the model by reducing the relative source of resting inflammatory cells $(s_{\text{nr}})$ and reducing initial proliferating fibroblast cells. Figure \ref{Fig3.12}, shows the activity of fibroblast cells with $s_{\text{nr}}= 1.2$ and $FIB_\text{p}=8$. 
\begin{figure}[]
	\centering
	\includegraphics[width=1.0\linewidth]{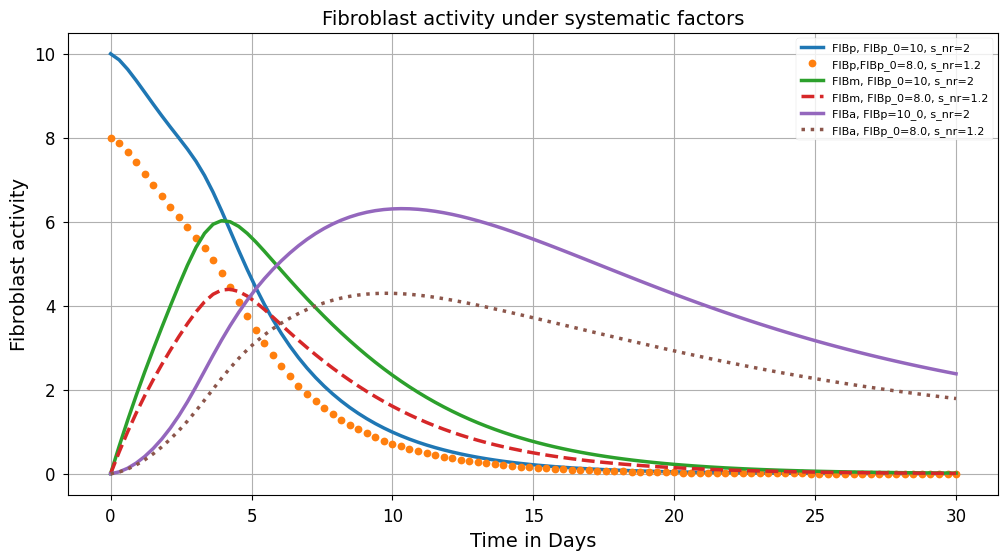}
	\caption{\begin{small}
			Activity of fibroblast cells in presence of systemic factors. Setting  $WS_0=0.6, FIB_\text{p}=8, FIB_\text{m}=0, FIB_\text{a}=0, COL=0, N=0, P=0.5$
	\end{small}}
	
	\label{Fig3.12}
\end{figure} In Figure \ref{Fig3.12}, it is observed that a reduced activity of proliferating fibroblast cells will reduce activity of migrating fibroblast cells and also affect the activity of active fibroblast cells as compared to the normal wound healing process. These reduced cell activities result in a delayed wound healing process. Figure \ref{Fig3.13} shows the increased activity of inflammatory cells due to the altered production of resting inflammatory cells by the body. This shows a prolonged inflammation due to a reduction in the production of resting inflammatory cells and their mediators due to systemic factors.
\begin{figure}[]
	\centering
	\includegraphics[width=1.0\linewidth]{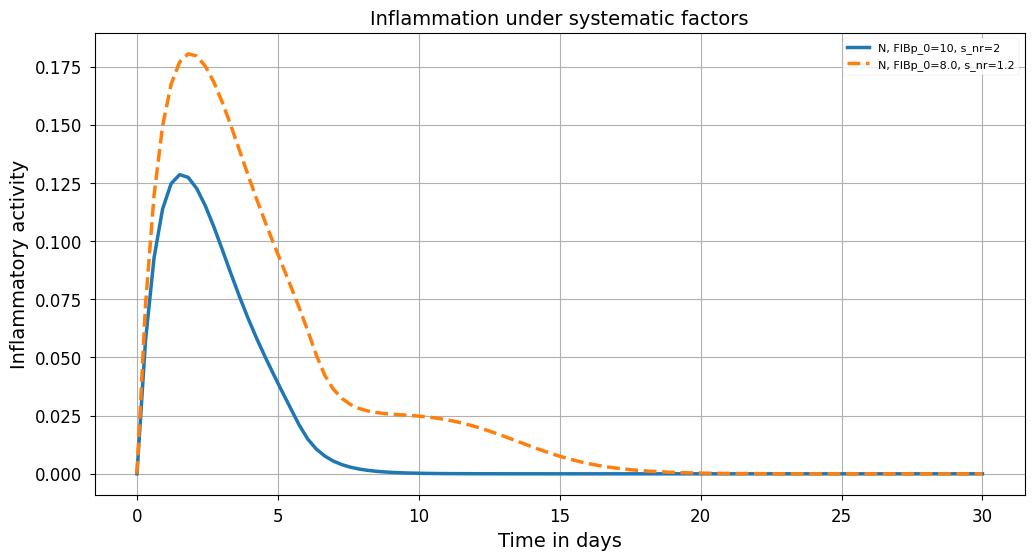}
	\caption{\begin{small}
			The activity of fibroblast cells in the presence of systemic factors. Setting  $WS_0=0.6, FIB_\text{p}=8, FIB_\text{m}=0, FIB_\text{a}=0, COL=0, N=0, P=0.5$
	\end{small}}
	
	\label{Fig3.13}
\end{figure}

\section{Conclusion}
This study builds upon the mathematical model developed by \cite{segal2012differential} to investigate the effects and contributions of different factors in the wound healing process. Through simulations and analysis, we have gained valuable insights into the effects of different factors on wound healing dynamics.  

Specifically, our findings highlight the crucial role of oxygen in controlling pathogen growth and enhancing the activity of inflammatory cells, thereby supporting optimal wound recovery. Additionally, we have identified the disruptive impact of increased pathogen activity in unclean wounds on the normal healing process, aligning with the classification of local factors affecting wound healing by \cite{guo2010factors}. The model has also been used to show the activity of cells in a chronic wound. Our model predicts that low oxygen levels or heightened pathogen activity may lead to chronic wounds, which can facilitate the entry of pathogens into the bloodstream, resulting in sepsis, a life-threatening condition.Our model predicts that low oxygen levels or increased pathogen activity may lead to chronic wounds, which could facilitate the entry of pathogens into the bloodstream, resulting in sepsis, a life-threatening condition. 

A sensitivity analysis using the Latin Hypercube Sampling method is carried out to identify the impact of individual parameters with respect to proliferating fibroblast cells $(FIB_p)$, migrating fibrolast cells $(FIB_m)$ active fibrobroblast cells $(FIB_p)$) and with respect to the threshold ($R_w$; a threshold which when less than unity guarantees the existence of a wound free equilibrium). It is worth noting that across all the sensitivity analyses,  our results depicted that the contact inhibition constant of FIBp reproduction by collagen $(C_{\infty})$, rate of production of collagen by active fibroblast cells $(k_{cf})$, and the fibroblast decay rate $(\mu_{fib})$ were most sensitive parameters. 

Furthermore, our study explores how various cell types respond to systemic factors like age, stress, alcohol consumption, and smoking. We observed reduced activity of migrating and active fibroblast cells due to impaired fibroblast cell proliferation. Additionally, there is an extended inflammatory period resulting from a decreased production of resting inflammatory cells and their mediators. Several studies have considered the effect of local factors such as oxygenation in the healing process. Unique to this work is the inclusion of systemic factors from Figure \ref{Fig3.12} and Figure \ref{Fig3.13} to describe cell behaviors in the wound healing process, the sensitivity analysis for the whole parameter space on fibroblast cells and we have successfully determined important parameters to the activity of fibroblast cells and the mathematical analysis to determine the wound-free equilibrium condition. 

This study provides valuable insights into the wound healing process. By considering the interplay between local factors and systemic factors, and important parameters in the wound healing process, we enhance our understanding of wound healing dynamics, which has important implications for effective wound management and the development of clinical therapies. It is crucial to recognize that the model's size makes it challenging to gather longitudinal data for calibration. However, as we look toward future directions, integrating empirical data could improve the relevance of our findings in clinical environments.

\section*{Data availability statement}
Data availability not applicable to this article as no datasets were generated or analysed during the current study.

\section*{CRediT authorship contribution statement}

\textbf{Alinafe Maenje:} Designed the study, developed the model, proved the mathematical properties, computed the numerical simulations, and wrote the original draft.
\textbf{Joseph Malinzi:} Supervised the study and reviewed the final draft.

\section*{Declaration of competing interests}

The authors declare that they have no known competing financial interests or personal relationships that could have appeared to influence the work reported in this paper.

\newcommand{\printdoi}[1]{
	\if\relax\detokenize{#1}\relax
	\else
	\newline DOI: \url{https://doi.org/#1}
	\fi
}

\bibliographystyle{apacite}
\bibliography{References}      

\begin{thebibliography}{}

\bibitem [\protect \citeauthoryear {%
Almeida%
, Bagnerini%
, Habbal%
, Noselli%
\BCBL {}\ \BBA {} Serman%
}{%
Almeida%
\ \protect \BOthers {.}}{%
{\protect \APACyear {2011}}%
}]{%
almeida2011mathematical}
\APACinsertmetastar {%
almeida2011mathematical}%
\begin{APACrefauthors}%
Almeida, L.%
, Bagnerini, P.%
, Habbal, A.%
, Noselli, S.%
\BCBL {}\ \BBA {} Serman, F.%
\end{APACrefauthors}%
\unskip\
\newblock
\APACrefYearMonthDay{2011}{{\APACmonth{01}}}{}.
\newblock
{\BBOQ}\APACrefatitle {A mathematical model for dorsal closure} {A mathematical
  model for dorsal closure}.{\BBCQ}
\newblock
\APACjournalVolNumPages{Journal of Theoretical Biology}{268}{1}{105–119}.
\newblock
\begin{APACrefURL} \url{http://dx.doi.org/10.1016/j.jtbi.2010.09.029}
  \end{APACrefURL}
\newblock
\begin{APACrefDOI} \doi{10.1016/j.jtbi.2010.09.029} \end{APACrefDOI}
\PrintBackRefs{\CurrentBib}

\bibitem [\protect \citeauthoryear {%
Arnold%
\ \BBA {} Barbul%
}{%
Arnold%
\ \BBA {} Barbul%
}{%
{\protect \APACyear {2006}}%
}]{%
arnold2006nutrition}
\APACinsertmetastar {%
arnold2006nutrition}%
\begin{APACrefauthors}%
Arnold, M.%
\BCBT {}\ \BBA {} Barbul, A.%
\end{APACrefauthors}%
\unskip\
\newblock
\APACrefYearMonthDay{2006}{{\APACmonth{06}}}{}.
\newblock
{\BBOQ}\APACrefatitle {Nutrition and Wound Healing} {Nutrition and wound
  healing}.{\BBCQ}
\newblock
\APACjournalVolNumPages{Plastic and Reconstructive
  Surgery}{117}{SUPPLEMENT}{42S-58S}.
\newblock
\begin{APACrefURL} \url{http://dx.doi.org/10.1097/01.prs.0000225432.17501.6c}
  \end{APACrefURL}
\newblock
\begin{APACrefDOI} \doi{10.1097/01.prs.0000225432.17501.6c} \end{APACrefDOI}
\PrintBackRefs{\CurrentBib}

\bibitem [\protect \citeauthoryear {%
Bertone%
}{%
Bertone%
}{%
{\protect \APACyear {1989}}%
}]{%
bertone1989principles}
\APACinsertmetastar {%
bertone1989principles}%
\begin{APACrefauthors}%
Bertone, A\BPBI L.%
\end{APACrefauthors}%
\unskip\
\newblock
\APACrefYearMonthDay{1989}{{\APACmonth{12}}}{}.
\newblock
{\BBOQ}\APACrefatitle {Principles of Wound Healing} {Principles of wound
  healing}.{\BBCQ}
\newblock
\APACjournalVolNumPages{Veterinary Clinics of North America: Equine
  Practice}{5}{3}{449–463}.
\newblock
\begin{APACrefURL} \url{http://dx.doi.org/10.1016/s0749-0739(17)30568-0}
  \end{APACrefURL}
\newblock
\begin{APACrefDOI} \doi{10.1016/s0749-0739(17)30568-0} \end{APACrefDOI}
\PrintBackRefs{\CurrentBib}

\bibitem [\protect \citeauthoryear {%
Blower%
\ \BBA {} Dowlatabadi%
}{%
Blower%
\ \BBA {} Dowlatabadi%
}{%
{\protect \APACyear {1994}}%
}]{%
Blower1994}
\APACinsertmetastar {%
Blower1994}%
\begin{APACrefauthors}%
Blower, S\BPBI M.%
\BCBT {}\ \BBA {} Dowlatabadi, H.%
\end{APACrefauthors}%
\unskip\
\newblock
\APACrefYearMonthDay{1994}{{\APACmonth{08}}}{}.
\newblock
{\BBOQ}\APACrefatitle {Sensitivity and Uncertainty Analysis of Complex Models
  of Disease Transmission: An HIV Model, as an Example} {Sensitivity and
  uncertainty analysis of complex models of disease transmission: An hiv model,
  as an example}.{\BBCQ}
\newblock
\APACjournalVolNumPages{International Statistical Review / Revue Internationale
  de Statistique}{62}{2}{229}.
\newblock
\begin{APACrefURL} \url{http://dx.doi.org/10.2307/1403510} \end{APACrefURL}
\newblock
\begin{APACrefDOI} \doi{10.2307/1403510} \end{APACrefDOI}
\PrintBackRefs{\CurrentBib}

\bibitem [\protect \citeauthoryear {%
Boyapati%
\ \BBA {} Wang%
}{%
Boyapati%
\ \BBA {} Wang%
}{%
{\protect \APACyear {2007}}%
}]{%
boyapati2007role}
\APACinsertmetastar {%
boyapati2007role}%
\begin{APACrefauthors}%
Boyapati, L.%
\BCBT {}\ \BBA {} Wang, H.%
\end{APACrefauthors}%
\unskip\
\newblock
\APACrefYearMonthDay{2007}{{\APACmonth{05}}}{}.
\newblock
{\BBOQ}\APACrefatitle {The role of stress in periodontal disease and wound
  healing} {The role of stress in periodontal disease and wound
  healing}.{\BBCQ}
\newblock
\APACjournalVolNumPages{Periodontology 2000}{44}{1}{195–210}.
\newblock
\begin{APACrefURL} \url{http://dx.doi.org/10.1111/j.1600-0757.2007.00211.x}
  \end{APACrefURL}
\newblock
\begin{APACrefDOI} \doi{10.1111/j.1600-0757.2007.00211.x} \end{APACrefDOI}
\PrintBackRefs{\CurrentBib}

\bibitem [\protect \citeauthoryear {%
Buckley%
\ \protect \BOthers {.}}{%
Buckley%
\ \protect \BOthers {.}}{%
{\protect \APACyear {2001}}%
}]{%
buckley2001fibroblasts}
\APACinsertmetastar {%
buckley2001fibroblasts}%
\begin{APACrefauthors}%
Buckley, C\BPBI D.%
, Pilling, D.%
, Lord, J\BPBI M.%
, Akbar, A\BPBI N.%
, Scheel-Toellner, D.%
\BCBL {}\ \BBA {} Salmon, M.%
\end{APACrefauthors}%
\unskip\
\newblock
\APACrefYearMonthDay{2001}{{\APACmonth{04}}}{}.
\newblock
{\BBOQ}\APACrefatitle {Fibroblasts regulate the switch from acute resolving to
  chronic persistent inflammation} {Fibroblasts regulate the switch from acute
  resolving to chronic persistent inflammation}.{\BBCQ}
\newblock
\APACjournalVolNumPages{Trends in Immunology}{22}{4}{199–204}.
\newblock
\begin{APACrefURL} \url{http://dx.doi.org/10.1016/s1471-4906(01)01863-4}
  \end{APACrefURL}
\newblock
\begin{APACrefDOI} \doi{10.1016/s1471-4906(01)01863-4} \end{APACrefDOI}
\PrintBackRefs{\CurrentBib}

\bibitem [\protect \citeauthoryear {%
Cooper%
, Segal%
, Diegelmann%
\BCBL {}\ \BBA {} Reynolds%
}{%
Cooper%
\ \protect \BOthers {.}}{%
{\protect \APACyear {2015}}%
}]{%
cooper2015modeling}
\APACinsertmetastar {%
cooper2015modeling}%
\begin{APACrefauthors}%
Cooper, R\BPBI L.%
, Segal, R\BPBI A.%
, Diegelmann, R\BPBI F.%
\BCBL {}\ \BBA {} Reynolds, A\BPBI M.%
\end{APACrefauthors}%
\unskip\
\newblock
\APACrefYearMonthDay{2015}{{\APACmonth{02}}}{}.
\newblock
{\BBOQ}\APACrefatitle {Modeling the effects of systemic mediators on the
  inflammatory phase of wound healing} {Modeling the effects of systemic
  mediators on the inflammatory phase of wound healing}.{\BBCQ}
\newblock
\APACjournalVolNumPages{Journal of Theoretical Biology}{367}{}{86–99}.
\newblock
\begin{APACrefURL} \url{http://dx.doi.org/10.1016/j.jtbi.2014.11.008}
  \end{APACrefURL}
\newblock
\begin{APACrefDOI} \doi{10.1016/j.jtbi.2014.11.008} \end{APACrefDOI}
\PrintBackRefs{\CurrentBib}

\bibitem [\protect \citeauthoryear {%
Davis%
\ \BBA {} McLister%
}{%
Davis%
\ \BBA {} McLister%
}{%
{\protect \APACyear {2016}}%
}]{%
davis_smart_2016}
\APACinsertmetastar {%
davis_smart_2016}%
\begin{APACrefauthors}%
Davis, J.%
\BCBT {}\ \BBA {} McLister, A.%
\end{APACrefauthors}%
\unskip\
\newblock
\APACrefYearMonthDay{2016}{}{}.
\newblock
{\BBOQ}\APACrefatitle {Introduction to Wound Management} {Introduction to wound
  management}.{\BBCQ}
\newblock
\BIn{} \APACrefbtitle {Smart Bandage Technologies} {Smart bandage
  technologies}\ (\BPG~1–35).
\newblock
\APACaddressPublisher{}{Elsevier}.
\newblock
\begin{APACrefURL} \url{http://dx.doi.org/10.1016/b978-0-12-803762-1.00001-1}
  \end{APACrefURL}
\newblock
\begin{APACrefDOI} \doi{10.1016/b978-0-12-803762-1.00001-1} \end{APACrefDOI}
\PrintBackRefs{\CurrentBib}

\bibitem [\protect \citeauthoryear {%
Descartes%
}{%
Descartes%
}{%
{\protect \APACyear {2017}}%
}]{%
Descartes2017}
\APACinsertmetastar {%
Descartes2017}%
\begin{APACrefauthors}%
Descartes, R.%
\end{APACrefauthors}%
\unskip\
\newblock
\APACrefYearMonthDay{2017}{{\APACmonth{07}}}{}.
\newblock
{\BBOQ}\APACrefatitle {Carta a Silhon, março de 1637} {Carta a silhon, março
  de 1637}.{\BBCQ}
\newblock
\APACjournalVolNumPages{Revista DIAPHONÍA}{3}{1}{241–242}.
\newblock
\begin{APACrefURL} \url{http://dx.doi.org/10.48075/rd.v3i1.17220}
  \end{APACrefURL}
\newblock
\begin{APACrefDOI} \doi{10.48075/rd.v3i1.17220} \end{APACrefDOI}
\PrintBackRefs{\CurrentBib}

\bibitem [\protect \citeauthoryear {%
Edwards%
\ \BBA {} Harding%
}{%
Edwards%
\ \BBA {} Harding%
}{%
{\protect \APACyear {2004}}%
}]{%
edwards2004bacteria}
\APACinsertmetastar {%
edwards2004bacteria}%
\begin{APACrefauthors}%
Edwards, R.%
\BCBT {}\ \BBA {} Harding, K\BPBI G.%
\end{APACrefauthors}%
\unskip\
\newblock
\APACrefYearMonthDay{2004}{{\APACmonth{04}}}{}.
\newblock
{\BBOQ}\APACrefatitle {Bacteria and wound healing} {Bacteria and wound
  healing}.{\BBCQ}
\newblock
\APACjournalVolNumPages{Current Opinion in Infectious
  Diseases}{17}{2}{91–96}.
\newblock
\begin{APACrefURL} \url{http://dx.doi.org/10.1097/00001432-200404000-00004}
  \end{APACrefURL}
\newblock
\begin{APACrefDOI} \doi{10.1097/00001432-200404000-00004} \end{APACrefDOI}
\PrintBackRefs{\CurrentBib}

\bibitem [\protect \citeauthoryear {%
Flegg%
, Byrne%
, Flegg%
\BCBL {}\ \BBA {} Sean~McElwain%
}{%
Flegg%
\ \protect \BOthers {.}}{%
{\protect \APACyear {2012}}%
}]{%
flegg2012wound}
\APACinsertmetastar {%
flegg2012wound}%
\begin{APACrefauthors}%
Flegg, J\BPBI A.%
, Byrne, H\BPBI M.%
, Flegg, M\BPBI B.%
\BCBL {}\ \BBA {} Sean~McElwain, D.%
\end{APACrefauthors}%
\unskip\
\newblock
\APACrefYearMonthDay{2012}{{\APACmonth{05}}}{}.
\newblock
{\BBOQ}\APACrefatitle {Wound healing angiogenesis: The clinical implications of
  a simple mathematical model} {Wound healing angiogenesis: The clinical
  implications of a simple mathematical model}.{\BBCQ}
\newblock
\APACjournalVolNumPages{Journal of Theoretical Biology}{300}{}{309–316}.
\newblock
\begin{APACrefURL} \url{http://dx.doi.org/10.1016/j.jtbi.2012.01.043}
  \end{APACrefURL}
\newblock
\begin{APACrefDOI} \doi{10.1016/j.jtbi.2012.01.043} \end{APACrefDOI}
\PrintBackRefs{\CurrentBib}

\bibitem [\protect \citeauthoryear {%
Frykberg%
\ \BBA {} Banks%
}{%
Frykberg%
\ \BBA {} Banks%
}{%
{\protect \APACyear {2015}}%
}]{%
frykbergrobert2015challenges}
\APACinsertmetastar {%
frykbergrobert2015challenges}%
\begin{APACrefauthors}%
Frykberg, R\BPBI G.%
\BCBT {}\ \BBA {} Banks, J.%
\end{APACrefauthors}%
\unskip\
\newblock
\APACrefYearMonthDay{2015}{{\APACmonth{09}}}{}.
\newblock
{\BBOQ}\APACrefatitle {Challenges in the Treatment of Chronic Wounds}
  {Challenges in the treatment of chronic wounds}.{\BBCQ}
\newblock
\APACjournalVolNumPages{Advances in Wound Care}{4}{9}{560–582}.
\newblock
\begin{APACrefURL} \url{http://dx.doi.org/10.1089/wound.2015.0635}
  \end{APACrefURL}
\newblock
\begin{APACrefDOI} \doi{10.1089/wound.2015.0635} \end{APACrefDOI}
\PrintBackRefs{\CurrentBib}

\bibitem [\protect \citeauthoryear {%
Guo%
\ \BBA {} DiPietro%
}{%
Guo%
\ \BBA {} DiPietro%
}{%
{\protect \APACyear {2010}}%
}]{%
guo2010factors}
\APACinsertmetastar {%
guo2010factors}%
\begin{APACrefauthors}%
Guo, S.%
\BCBT {}\ \BBA {} DiPietro, L.%
\end{APACrefauthors}%
\unskip\
\newblock
\APACrefYearMonthDay{2010}{{\APACmonth{02}}}{}.
\newblock
{\BBOQ}\APACrefatitle {Factors Affecting Wound Healing} {Factors affecting
  wound healing}.{\BBCQ}
\newblock
\APACjournalVolNumPages{Journal of Dental Research}{89}{3}{219–229}.
\newblock
\begin{APACrefURL} \url{http://dx.doi.org/10.1177/0022034509359125}
  \end{APACrefURL}
\newblock
\begin{APACrefDOI} \doi{10.1177/0022034509359125} \end{APACrefDOI}
\PrintBackRefs{\CurrentBib}

\bibitem [\protect \citeauthoryear {%
Hay%
\ \protect \BOthers {.}}{%
Hay%
\ \protect \BOthers {.}}{%
{\protect \APACyear {2024}}%
}]{%
Hay2024}
\APACinsertmetastar {%
Hay2024}%
\begin{APACrefauthors}%
Hay, Q.%
, Pak, E.%
, Gardner, L.%
, Shaw, A.%
, Roger, L\BPBI M.%
, Lewinski, N\BPBI A.%
\BDBL {}Reynolds, A\BPBI M.%
\end{APACrefauthors}%
\unskip\
\newblock
\APACrefYearMonthDay{2024}{{\APACmonth{10}}}{}.
\newblock
{\BBOQ}\APACrefatitle {A mathematical model for wound healing in the
  reef-building coral Pocillopora damicornis} {A mathematical model for wound
  healing in the reef-building coral pocillopora damicornis}.{\BBCQ}
\newblock
\APACjournalVolNumPages{Journal of Theoretical Biology}{593}{}{111897}.
\newblock
\begin{APACrefURL} \url{http://dx.doi.org/10.1016/j.jtbi.2024.111897}
  \end{APACrefURL}
\newblock
\begin{APACrefDOI} \doi{10.1016/j.jtbi.2024.111897} \end{APACrefDOI}
\PrintBackRefs{\CurrentBib}

\bibitem [\protect \citeauthoryear {%
Hoare%
, Regan%
\BCBL {}\ \BBA {} Wilson%
}{%
Hoare%
\ \protect \BOthers {.}}{%
{\protect \APACyear {2008}}%
}]{%
Hoare2008}
\APACinsertmetastar {%
Hoare2008}%
\begin{APACrefauthors}%
Hoare, A.%
, Regan, D\BPBI G.%
\BCBL {}\ \BBA {} Wilson, D\BPBI P.%
\end{APACrefauthors}%
\unskip\
\newblock
\APACrefYearMonthDay{2008}{{\APACmonth{02}}}{}.
\newblock
{\BBOQ}\APACrefatitle {Sampling and sensitivity analyses tools (SaSAT) for
  computational modelling} {Sampling and sensitivity analyses tools (sasat) for
  computational modelling}.{\BBCQ}
\newblock
\APACjournalVolNumPages{Theoretical Biology and Medical Modelling}{5}{1}{}.
\newblock
\begin{APACrefURL} \url{http://dx.doi.org/10.1186/1742-4682-5-4}
  \end{APACrefURL}
\newblock
\begin{APACrefDOI} \doi{10.1186/1742-4682-5-4} \end{APACrefDOI}
\PrintBackRefs{\CurrentBib}

\bibitem [\protect \citeauthoryear {%
Maenje%
}{%
Maenje%
}{%
{\protect \APACyear {2024}}%
}]{%
Alinafegithublink}
\APACinsertmetastar {%
Alinafegithublink}%
\begin{APACrefauthors}%
Maenje, A.%
\end{APACrefauthors}%
\unskip\
\newblock
\APACrefYearMonthDay{2024}{}{}.
\newblock
\APACrefbtitle {Model Numerical Simulations.} {Model numerical simulations.}
\newblock
\APAChowpublished
  {\url{https://github.com/alinafem/Mathematical-modelling/blob/main/Wound_Healing_Model_Workbook.ipynb}}.
\PrintBackRefs{\CurrentBib}

\bibitem [\protect \citeauthoryear {%
Martin%
\ \BBA {} Leibovich%
}{%
Martin%
\ \BBA {} Leibovich%
}{%
{\protect \APACyear {2005}}%
}]{%
martin2005inflammatory}
\APACinsertmetastar {%
martin2005inflammatory}%
\begin{APACrefauthors}%
Martin, P.%
\BCBT {}\ \BBA {} Leibovich, S\BPBI J.%
\end{APACrefauthors}%
\unskip\
\newblock
\APACrefYearMonthDay{2005}{{\APACmonth{11}}}{}.
\newblock
{\BBOQ}\APACrefatitle {Inflammatory cells during wound repair: the good, the
  bad and the ugly} {Inflammatory cells during wound repair: the good, the bad
  and the ugly}.{\BBCQ}
\newblock
\APACjournalVolNumPages{Trends in Cell Biology}{15}{11}{599–607}.
\newblock
\begin{APACrefURL} \url{http://dx.doi.org/10.1016/j.tcb.2005.09.002}
  \end{APACrefURL}
\newblock
\begin{APACrefDOI} \doi{10.1016/j.tcb.2005.09.002} \end{APACrefDOI}
\PrintBackRefs{\CurrentBib}

\bibitem [\protect \citeauthoryear {%
Menke%
\ \protect \BOthers {.}}{%
Menke%
\ \protect \BOthers {.}}{%
{\protect \APACyear {2010}}%
{\protect \APACexlab {{\protect \BCnt {1}}}}}]{%
menke2010silico}
\APACinsertmetastar {%
menke2010silico}%
\begin{APACrefauthors}%
Menke, N\BPBI B.%
, Cain, J\BPBI W.%
, Reynolds, A.%
, Chan, D\BPBI M.%
, Segal, R\BPBI A.%
, Witten, T\BPBI M.%
\BDBL {}Ward, K\BPBI R.%
\end{APACrefauthors}%
\unskip\
\newblock
\APACrefYearMonthDay{2010{\protect \BCnt {1}}}{{\APACmonth{01}}}{}.
\newblock
{\BBOQ}\APACrefatitle {An in silico approach to the analysis of acute wound
  healing} {An in silico approach to the analysis of acute wound
  healing}.{\BBCQ}
\newblock
\APACjournalVolNumPages{Wound Repair and Regeneration}{18}{1}{105–113}.
\newblock
\begin{APACrefURL} \url{http://dx.doi.org/10.1111/j.1524-475x.2009.00549.x}
  \end{APACrefURL}
\newblock
\begin{APACrefDOI} \doi{10.1111/j.1524-475x.2009.00549.x} \end{APACrefDOI}
\PrintBackRefs{\CurrentBib}

\bibitem [\protect \citeauthoryear {%
Menke%
\ \protect \BOthers {.}}{%
Menke%
\ \protect \BOthers {.}}{%
{\protect \APACyear {2010}}%
{\protect \APACexlab {{\protect \BCnt {2}}}}}]{%
menke2010virginia}
\APACinsertmetastar {%
menke2010virginia}%
\begin{APACrefauthors}%
Menke, N\BPBI B.%
, Cain, J\BPBI W.%
, Reynolds, A.%
, Chan, D\BPBI M.%
, Segal, R\BPBI A.%
, Witten, T\BPBI M.%
\BDBL {}Ward, K\BPBI R.%
\end{APACrefauthors}%
\unskip\
\newblock
\APACrefYearMonthDay{2010{\protect \BCnt {2}}}{{\APACmonth{01}}}{}.
\newblock
{\BBOQ}\APACrefatitle {An in silico approach to the analysis of acute wound
  healing} {An in silico approach to the analysis of acute wound
  healing}.{\BBCQ}
\newblock
\APACjournalVolNumPages{Wound Repair and Regeneration}{18}{1}{105–113}.
\newblock
\begin{APACrefURL} \url{http://dx.doi.org/10.1111/j.1524-475x.2009.00549.x}
  \end{APACrefURL}
\newblock
\begin{APACrefDOI} \doi{10.1111/j.1524-475x.2009.00549.x} \end{APACrefDOI}
\PrintBackRefs{\CurrentBib}

\bibitem [\protect \citeauthoryear {%
Menke%
, Ward%
, Witten%
, Bonchev%
\BCBL {}\ \BBA {} Diegelmann%
}{%
Menke%
\ \protect \BOthers {.}}{%
{\protect \APACyear {2007}}%
}]{%
menke2007impaired}
\APACinsertmetastar {%
menke2007impaired}%
\begin{APACrefauthors}%
Menke, N\BPBI B.%
, Ward, K\BPBI R.%
, Witten, T\BPBI M.%
, Bonchev, D\BPBI G.%
\BCBL {}\ \BBA {} Diegelmann, R\BPBI F.%
\end{APACrefauthors}%
\unskip\
\newblock
\APACrefYearMonthDay{2007}{{\APACmonth{01}}}{}.
\newblock
{\BBOQ}\APACrefatitle {Impaired wound healing} {Impaired wound healing}.{\BBCQ}
\newblock
\APACjournalVolNumPages{Clinics in Dermatology}{25}{1}{19–25}.
\newblock
\begin{APACrefURL} \url{http://dx.doi.org/10.1016/j.clindermatol.2006.12.005}
  \end{APACrefURL}
\newblock
\begin{APACrefDOI} \doi{10.1016/j.clindermatol.2006.12.005} \end{APACrefDOI}
\PrintBackRefs{\CurrentBib}

\bibitem [\protect \citeauthoryear {%
Odhiambo%
, Omondi%
\BCBL {}\ \BBA {} Magak%
}{%
Odhiambo%
\ \protect \BOthers {.}}{%
{\protect \APACyear {2019}}%
}]{%
odhiambo2019wound}
\APACinsertmetastar {%
odhiambo2019wound}%
\begin{APACrefauthors}%
Odhiambo, P\BPBI A.%
, Omondi, K.%
\BCBL {}\ \BBA {} Magak, N.%
\end{APACrefauthors}%
\unskip\
\newblock
\APACrefYearMonthDay{2019}{{\APACmonth{02}}}{}.
\newblock
{\BBOQ}\APACrefatitle {Wound dressing techniques and costs at a county
  hospital} {Wound dressing techniques and costs at a county hospital}.{\BBCQ}
\newblock
\APACjournalVolNumPages{Annals of African Surgery}{16}{1}{}.
\newblock
\begin{APACrefURL} \url{http://dx.doi.org/10.4314/aas.v16i1.8} \end{APACrefURL}
\newblock
\begin{APACrefDOI} \doi{10.4314/aas.v16i1.8} \end{APACrefDOI}
\PrintBackRefs{\CurrentBib}

\bibitem [\protect \citeauthoryear {%
Ogundeji%
}{%
Ogundeji%
}{%
{\protect \APACyear {2022}}%
}]{%
ogundeji2022modelling}
\APACinsertmetastar {%
ogundeji2022modelling}%
\begin{APACrefauthors}%
Ogundeji, K\BPBI D.%
\end{APACrefauthors}%
\unskip\
\newblock
\APACrefYearMonthDay{2022}{{\APACmonth{01}}}{}.
\newblock
{\BBOQ}\APACrefatitle {Modelling Outpatients Cost of Wound Dressing in Nigeria:
  A Prospective Study} {Modelling outpatients cost of wound dressing in
  nigeria: A prospective study}.{\BBCQ}
\newblock
\begin{APACrefURL} \url{http://dx.doi.org/10.21203/rs.3.rs-1234952/v1}
  \end{APACrefURL}
\newblock
\begin{APACrefDOI} \doi{10.21203/rs.3.rs-1234952/v1} \end{APACrefDOI}
\PrintBackRefs{\CurrentBib}

\bibitem [\protect \citeauthoryear {%
Reynolds%
\ \protect \BOthers {.}}{%
Reynolds%
\ \protect \BOthers {.}}{%
{\protect \APACyear {2006}}%
}]{%
reynolds2006reduced}
\APACinsertmetastar {%
reynolds2006reduced}%
\begin{APACrefauthors}%
Reynolds, A.%
, Rubin, J.%
, Clermont, G.%
, Day, J.%
, Vodovotz, Y.%
\BCBL {}\ \BBA {} Bard~Ermentrout, G.%
\end{APACrefauthors}%
\unskip\
\newblock
\APACrefYearMonthDay{2006}{{\APACmonth{09}}}{}.
\newblock
{\BBOQ}\APACrefatitle {A reduced mathematical model of the acute inflammatory
  response: I. Derivation of model and analysis of anti-inflammation} {A
  reduced mathematical model of the acute inflammatory response: I. derivation
  of model and analysis of anti-inflammation}.{\BBCQ}
\newblock
\APACjournalVolNumPages{Journal of Theoretical Biology}{242}{1}{220–236}.
\newblock
\begin{APACrefURL} \url{http://dx.doi.org/10.1016/j.jtbi.2006.02.016}
  \end{APACrefURL}
\newblock
\begin{APACrefDOI} \doi{10.1016/j.jtbi.2006.02.016} \end{APACrefDOI}
\PrintBackRefs{\CurrentBib}

\bibitem [\protect \citeauthoryear {%
Rodriguez%
, Felix%
, Woodley%
\BCBL {}\ \BBA {} Shim%
}{%
Rodriguez%
\ \protect \BOthers {.}}{%
{\protect \APACyear {2008}}%
}]{%
rodriguez2008role}
\APACinsertmetastar {%
rodriguez2008role}%
\begin{APACrefauthors}%
Rodriguez, P\BPBI G.%
, Felix, F\BPBI N.%
, Woodley, D\BPBI T.%
\BCBL {}\ \BBA {} Shim, E\BPBI K.%
\end{APACrefauthors}%
\unskip\
\newblock
\APACrefYearMonthDay{2008}{{\APACmonth{09}}}{}.
\newblock
{\BBOQ}\APACrefatitle {The Role of Oxygen in Wound Healing: A Review of the
  Literature} {The role of oxygen in wound healing: A review of the
  literature}.{\BBCQ}
\newblock
\APACjournalVolNumPages{Dermatologic Surgery}{34}{9}{1159–1169}.
\newblock
\begin{APACrefURL} \url{http://dx.doi.org/10.1097/00042728-200809000-00001}
  \end{APACrefURL}
\newblock
\begin{APACrefDOI} \doi{10.1097/00042728-200809000-00001} \end{APACrefDOI}
\PrintBackRefs{\CurrentBib}

\bibitem [\protect \citeauthoryear {%
Segal%
, Diegelmann%
, Ward%
\BCBL {}\ \BBA {} Reynolds%
}{%
Segal%
\ \protect \BOthers {.}}{%
{\protect \APACyear {2012}}%
}]{%
segal2012differential}
\APACinsertmetastar {%
segal2012differential}%
\begin{APACrefauthors}%
Segal, R\BPBI A.%
, Diegelmann, R\BPBI F.%
, Ward, K\BPBI R.%
\BCBL {}\ \BBA {} Reynolds, A.%
\end{APACrefauthors}%
\unskip\
\newblock
\APACrefYearMonthDay{2012}{{\APACmonth{07}}}{}.
\newblock
{\BBOQ}\APACrefatitle {A Differential Equation Model of Collagen Accumulation
  in a Healing Wound} {A differential equation model of collagen accumulation
  in a healing wound}.{\BBCQ}
\newblock
\APACjournalVolNumPages{Bulletin of Mathematical Biology}{74}{9}{2165–2182}.
\newblock
\begin{APACrefURL} \url{http://dx.doi.org/10.1007/s11538-012-9751-z}
  \end{APACrefURL}
\newblock
\begin{APACrefDOI} \doi{10.1007/s11538-012-9751-z} \end{APACrefDOI}
\PrintBackRefs{\CurrentBib}

\bibitem [\protect \citeauthoryear {%
Sen%
}{%
Sen%
}{%
{\protect \APACyear {2021}}%
}]{%
sen2021human}
\APACinsertmetastar {%
sen2021human}%
\begin{APACrefauthors}%
Sen, C\BPBI K.%
\end{APACrefauthors}%
\unskip\
\newblock
\APACrefYearMonthDay{2021}{{\APACmonth{05}}}{}.
\newblock
{\BBOQ}\APACrefatitle {Human Wound and Its Burden: Updated 2020 Compendium of
  Estimates} {Human wound and its burden: Updated 2020 compendium of
  estimates}.{\BBCQ}
\newblock
\APACjournalVolNumPages{Advances in Wound Care}{10}{5}{281–292}.
\newblock
\begin{APACrefURL} \url{http://dx.doi.org/10.1089/wound.2021.0026}
  \end{APACrefURL}
\newblock
\begin{APACrefDOI} \doi{10.1089/wound.2021.0026} \end{APACrefDOI}
\PrintBackRefs{\CurrentBib}

\bibitem [\protect \citeauthoryear {%
Swift%
, Burns%
, Gray%
\BCBL {}\ \BBA {} DiPietro%
}{%
Swift%
\ \protect \BOthers {.}}{%
{\protect \APACyear {2001}}%
}]{%
swift2001age}
\APACinsertmetastar {%
swift2001age}%
\begin{APACrefauthors}%
Swift, M\BPBI E.%
, Burns, A\BPBI L.%
, Gray, K\BPBI L.%
\BCBL {}\ \BBA {} DiPietro, L\BPBI A.%
\end{APACrefauthors}%
\unskip\
\newblock
\APACrefYearMonthDay{2001}{{\APACmonth{11}}}{}.
\newblock
{\BBOQ}\APACrefatitle {Age-Related Alterations in the Inflammatory Response to
  Dermal Injury} {Age-related alterations in the inflammatory response to
  dermal injury}.{\BBCQ}
\newblock
\APACjournalVolNumPages{Journal of Investigative
  Dermatology}{117}{5}{1027–1035}.
\newblock
\begin{APACrefURL} \url{http://dx.doi.org/10.1046/j.0022-202x.2001.01539.x}
  \end{APACrefURL}
\newblock
\begin{APACrefDOI} \doi{10.1046/j.0022-202x.2001.01539.x} \end{APACrefDOI}
\PrintBackRefs{\CurrentBib}

\bibitem [\protect \citeauthoryear {%
Taylor%
}{%
Taylor%
}{%
{\protect \APACyear {1990}}%
}]{%
Taylor1990}
\APACinsertmetastar {%
Taylor1990}%
\begin{APACrefauthors}%
Taylor, R.%
\end{APACrefauthors}%
\unskip\
\newblock
\APACrefYearMonthDay{1990}{{\APACmonth{01}}}{}.
\newblock
{\BBOQ}\APACrefatitle {Interpretation of the Correlation Coefficient: A Basic
  Review} {Interpretation of the correlation coefficient: A basic
  review}.{\BBCQ}
\newblock
\APACjournalVolNumPages{Journal of Diagnostic Medical
  Sonography}{6}{1}{35–39}.
\newblock
\begin{APACrefURL} \url{http://dx.doi.org/10.1177/875647939000600106}
  \end{APACrefURL}
\newblock
\begin{APACrefDOI} \doi{10.1177/875647939000600106} \end{APACrefDOI}
\PrintBackRefs{\CurrentBib}

\bibitem [\protect \citeauthoryear {%
Torres%
\ \protect \BOthers {.}}{%
Torres%
\ \protect \BOthers {.}}{%
{\protect \APACyear {2019}}%
}]{%
torres2019identifying}
\APACinsertmetastar {%
torres2019identifying}%
\begin{APACrefauthors}%
Torres, M.%
, Wang, J.%
, Yannie, P\BPBI J.%
, Ghosh, S.%
, Segal, R\BPBI A.%
\BCBL {}\ \BBA {} Reynolds, A\BPBI M.%
\end{APACrefauthors}%
\unskip\
\newblock
\APACrefYearMonthDay{2019}{{\APACmonth{07}}}{}.
\newblock
{\BBOQ}\APACrefatitle {Identifying important parameters in the inflammatory
  process with a mathematical model of immune cell influx and macrophage
  polarization} {Identifying important parameters in the inflammatory process
  with a mathematical model of immune cell influx and macrophage
  polarization}.{\BBCQ}
\newblock
\APACjournalVolNumPages{PLOS Computational Biology}{15}{7}{e1007172}.
\newblock
\begin{APACrefURL} \url{http://dx.doi.org/10.1371/journal.pcbi.1007172}
  \end{APACrefURL}
\newblock
\begin{APACrefDOI} \doi{10.1371/journal.pcbi.1007172} \end{APACrefDOI}
\PrintBackRefs{\CurrentBib}

\bibitem [\protect \citeauthoryear {%
Varela%
\ \protect \BOthers {.}}{%
Varela%
\ \protect \BOthers {.}}{%
{\protect \APACyear {2017}}%
}]{%
varela2017untreated}
\APACinsertmetastar {%
varela2017untreated}%
\begin{APACrefauthors}%
Varela, C.%
, Young, S.%
, Groen, R.%
, Banza, L.%
, Mkandawire, N\BPBI C.%
\BCBL {}\ \BBA {} Viste, A.%
\end{APACrefauthors}%
\unskip\
\newblock
\APACrefYearMonthDay{2017}{{\APACmonth{11}}}{}.
\newblock
{\BBOQ}\APACrefatitle {Untreated surgical conditions in Malawi: A randomised
  cross-sectional nationwide household survey} {Untreated surgical conditions
  in malawi: A randomised cross-sectional nationwide household survey}.{\BBCQ}
\newblock
\APACjournalVolNumPages{Malawi Medical Journal}{29}{3}{231}.
\newblock
\begin{APACrefURL} \url{http://dx.doi.org/10.4314/mmj.v29i3.1} \end{APACrefURL}
\newblock
\begin{APACrefDOI} \doi{10.4314/mmj.v29i3.1} \end{APACrefDOI}
\PrintBackRefs{\CurrentBib}

\bibitem [\protect \citeauthoryear {%
Veličković%
, Szilcz%
, Milošević%
, Godfrey%
\BCBL {}\ \BBA {} Siebert%
}{%
Veličković%
\ \protect \BOthers {.}}{%
{\protect \APACyear {2021}}%
}]{%
velivckovic2022cost}
\APACinsertmetastar {%
velivckovic2022cost}%
\begin{APACrefauthors}%
Veličković, V\BPBI M.%
, Szilcz, M.%
, Milošević, Z.%
, Godfrey, T.%
\BCBL {}\ \BBA {} Siebert, U.%
\end{APACrefauthors}%
\unskip\
\newblock
\APACrefYearMonthDay{2021}{{\APACmonth{08}}}{}.
\newblock
{\BBOQ}\APACrefatitle {Cost‐effectiveness analysis of superabsorbent wound
  dressings in patients with moderate‐to‐highly exuding leg ulcers in
  Germany} {Cost‐effectiveness analysis of superabsorbent wound dressings in
  patients with moderate‐to‐highly exuding leg ulcers in germany}.{\BBCQ}
\newblock
\APACjournalVolNumPages{International Wound Journal}{19}{2}{447–459}.
\newblock
\begin{APACrefURL} \url{http://dx.doi.org/10.1111/iwj.13645} \end{APACrefURL}
\newblock
\begin{APACrefDOI} \doi{10.1111/iwj.13645} \end{APACrefDOI}
\PrintBackRefs{\CurrentBib}

\bibitem [\protect \citeauthoryear {%
Velnar%
, Bailey%
\BCBL {}\ \BBA {} Smrkolj%
}{%
Velnar%
\ \protect \BOthers {.}}{%
{\protect \APACyear {2009}}%
}]{%
velnar2009wound}
\APACinsertmetastar {%
velnar2009wound}%
\begin{APACrefauthors}%
Velnar, T.%
, Bailey, T.%
\BCBL {}\ \BBA {} Smrkolj, V.%
\end{APACrefauthors}%
\unskip\
\newblock
\APACrefYearMonthDay{2009}{{\APACmonth{10}}}{}.
\newblock
{\BBOQ}\APACrefatitle {The Wound Healing Process: An Overview of the Cellular
  and Molecular Mechanisms} {The wound healing process: An overview of the
  cellular and molecular mechanisms}.{\BBCQ}
\newblock
\APACjournalVolNumPages{Journal of International Medical
  Research}{37}{5}{1528–1542}.
\newblock
\begin{APACrefURL} \url{http://dx.doi.org/10.1177/147323000903700531}
  \end{APACrefURL}
\newblock
\begin{APACrefDOI} \doi{10.1177/147323000903700531} \end{APACrefDOI}
\PrintBackRefs{\CurrentBib}

\bibitem [\protect \citeauthoryear {%
Vileikyte%
}{%
Vileikyte%
}{%
{\protect \APACyear {2007}}%
}]{%
vileikyte2007stress}
\APACinsertmetastar {%
vileikyte2007stress}%
\begin{APACrefauthors}%
Vileikyte, L.%
\end{APACrefauthors}%
\unskip\
\newblock
\APACrefYearMonthDay{2007}{{\APACmonth{01}}}{}.
\newblock
{\BBOQ}\APACrefatitle {Stress and wound healing} {Stress and wound
  healing}.{\BBCQ}
\newblock
\APACjournalVolNumPages{Clinics in Dermatology}{25}{1}{49–55}.
\newblock
\begin{APACrefURL} \url{http://dx.doi.org/10.1016/j.clindermatol.2006.09.005}
  \end{APACrefURL}
\newblock
\begin{APACrefDOI} \doi{10.1016/j.clindermatol.2006.09.005} \end{APACrefDOI}
\PrintBackRefs{\CurrentBib}

\end{thebibliography}

\end{document}